\documentclass[a4paper,12pt]{article}

\usepackage{authblk}
\usepackage{hyperref}
\usepackage{braket}
\usepackage{amssymb}
\usepackage{amsmath}
\usepackage{amsthm}
\usepackage{amscd}
\usepackage[usenames,dvipsnames]{xcolor}
\usepackage{graphicx}
\usepackage{geometry}
\usepackage{mathpazo}

\newtheorem{theorem}{Theorem}[section]

\theoremstyle{definition}

\theoremstyle{remark}
\newtheorem{remark}[theorem]{Remark}

\begin{document}

\title{On thermalization in random free fermions\\
{\large Statistical origins and energy-dependent ETH structure}}

\author[1]{Xiao-Kan Guo \footnote{E-mail: kankuohsiao@whu.edu.cn}}
\affil[1]{\it School of Mathematics \& Physics, Yancheng Institute of Technology, Yancheng 224051, Jiangsu, China}
\author[2]{Zhiqiang Huang \footnote{E-mail: zqhuang@hubu.edu.cn}}
\affil[2]{\it School of Physics, Hubei University, Wuhan 430062, Hubei, China}
\date{\today}

\maketitle

\begin{abstract}
We analyze thermalization and eigenstate thermalization in the random free fermion model, a quadratic Hamiltonian with a Gaussian Orthogonal Ensemble hopping matrix, by an exact random-matrix calculation. For a single-particle observable $A$ we compute the ensemble average and fluctuations of $\operatorname{Tr}(A\rho(t))$: we derive the relaxation envelope $g^2(t/\tau_\lambda)$ with $\tau_\lambda = \hbar/(2\eta\sqrt{N})$ and its oscillatory $|t|^{-3}$ tail, both properties of the semicircular limiting density, and we show that the variance of the fluctuations is $O(1/N)$, so that the relaxation profile self-averages at fixed scaled time. Incorporating particle-number conservation, we prove a factorization theorem for the energy-shell statistics of the many-body eigenstates: the energy-independent ETH structure follows from the statistical independence of the eigenvector matrix and the spectrum, and at fixed realization the energy constraint enters only through a covariance subtraction of relative order $1/N$. A controlled eigenvector--eigenvalue correlation deformation $H(\varepsilon)=H_0+\varepsilon H_1$ then generates, at first order in $\varepsilon$, the energy dependence of the diagonal energy-resolved slope and of the off-diagonal spectral function, a single-particle diagnostic of the ETH structure in the allowed particle-hole sector. Correlation-function fluctuations and the single-particle spectral form factor finally identify the subleading $O(1/N)$ late-time fingerprints of two-level spectral correlations. The model thus separates eigenbasis chaos from spectral chaos and identifies the statistical origin of energy-independent versus energy-dependent ETH.
\end{abstract}
\vspace{\stretch{3}}
\thispagestyle{empty}
\newpage
\pagenumbering{arabic}

\section{Introduction}
\label{sec:intro}
The question of how an isolated quantum system, evolving unitarily, can exhibit thermodynamic behavior is central to quantum statistical mechanics. The eigenstate thermalization hypothesis (ETH) \cite{Deutsch1991, Srednicki1994} offers a compelling answer: The individual energy eigenstates themselves encode thermal properties, so that the long-time average of any few-body observable coincides with its microcanonical expectation. 
While ETH is widely believed to hold for chaotic quantum systems \cite{Srednicki1999},  analytical examples are scarce, and in most models exhibiting ETH one resorts to numerical experiments \cite{D'Alessio2016}. Since quantum chaos can be studied via the random matrix theory \cite{GMW98,Mehta2004,Haake2010}, one naturally expects analytical treatments of ETH relying on random matrices.

Mag\'an \cite{Magan2016} introduced an exactly solvable model of random free fermions governed by a quadratic Hamiltonian with a Gaussian Orthogonal Ensemble (GOE) random hopping matrix. In this model, the eigenstates with a large number of particles satisfy ETH, with correlation functions and entanglement entropies approaching their thermal values up to $O(1/N)$ corrections. This model is particularly attractive because it combines the simplicity of a free system with the statistical properties of random matrices, making it a controlled setting in which the statistical ingredients of thermalization and ETH can be separated and analyzed one by one.

Notice that ETH concerns properties of energy eigenstates, which encode the early-time physics of a chaotic quantum system, whereas the late-time behaviour of chaotic spectra is described by random matrix theory \cite{Haake2010, Mehta2004}. Indeed, higher-order correlation functions such as the out-of-time-ordered correlators (OTOC) characterizing operator growth can be computed in the energy eigenstates \cite{FK19,PFK22}. A possible connection between early-time quantum chaos and late-time random-matrix behaviour is suggested by the Bohigas-Giannoni-Schmit (BGS) conjecture \cite{BGS}, stating that the spectral fluctuations of quantum chaotic systems correspond to random matrix ensembles. The BGS conjecture does not, however, directly justify random-matrix models of ETH: for the BGS correspondence to apply to ETH, further assumptions are needed, except in special cases \cite{MaganWu2024}. There is thus more to study in random-matrix models for thermalization beyond ETH itself.

The so-called BGS-inspired random-matrix thermalization has been investigated recently by Weidenm\"uller \cite{Weidenmuller2024a, Weidenmuller2024b}. Treating the Hamiltonian as a member of the GOE or GUE, he computed the ensemble average and variance of $\operatorname{Tr}(A\rho(t))$, demonstrating that the ensemble average relaxes along an envelope determined solely by the average level density. His analysis reveals the central role of the Fourier transform of the level density, $g(t)$, and shows that thermalization carries an explicit time dependence, in contrast to the time-independent statements of ETH \cite{W25}.

In this paper we study this BGS-inspired random-matrix thermalization in the random free fermion model, and use the model to identify which statistical structure of the random Hamiltonian generates which layer of thermalization and ETH. What is new relative to the single-particle random-matrix treatment of \cite{Weidenmuller2024a,Weidenmuller2024b} is the many-body content: the fixed-particle-number Slater-determinant structure of the model connects the single-particle Haar statistics to the energy-shell statistics of the many-body eigenstates (Sec.~\ref{sec:conservation}), and a controlled deformation switches on the energy dependence that the undeformed ensemble lacks. Our results are organized around two main statements, with the remaining sections providing their dynamical and numerical support.

\emph{Main result I: exact statistical separation in the undeformed ensemble.} We first compute the ensemble average and fluctuations of $\operatorname{Tr}(A\rho(t))$ for a single-particle observable $A$: the leading relaxation envelope $g^2(t/\tau_\lambda)$ with $\tau_\lambda = \hbar/(2\eta\sqrt{N})$, its oscillatory $|t|^{-3}$ tail associated with the semicircular limiting density, and the variance of the fluctuations vanishing as $O(1/N)$. Throughout, eigenvector moments are treated with the exact Haar (Weingarten) calculus on $O(N)$ rather than the Gaussian approximation, which is essential at the $O(1/N)$ level and guarantees that the exact initial condition at $t=0$ is respected. This not only confirms the ETH results of \cite{Magan2016} from a dynamical perspective, but also yields explicit expressions for the relaxation envelope, the characteristic time scale, and the fluctuations.
 
Then, we extend the analysis to incorporate particle number conservation, and derive the effective temperature and chemical potential from the initial particle number and energy density.
 We further compute, with the particle number conservation, \emph{energy-shell} eigenstate statistics and prove a factorization theorem: because the eigenvector matrix and the eigenvalues are independent in this ensemble, the joint law of an eigenstate expectation value and the eigenstate energy factorizes. At the ensemble level, the microcanonical average is exactly energy-independent and the conditional variance equals its unconditional Haar value, while at fixed realization the energy-shell conditioning alters the fluctuations only through a covariance subtraction of relative order $1/N$. 
The extension to $k$-body observables is carried out at full coefficient level for $k=2$; the exact finite-population variance and covariance formulas are derived and verified numerically, showing that the covariance subtraction remains of relative order $1/N$ for every fixed $k$.

 \emph{Main result II: controlled breaking of the separation.} The natural converse question---where the nontrivial energy dependence of the ETH functions comes from---is answered constructively by a controlled correlation deformation $H(\varepsilon)=H_0+\varepsilon H_1$: at first order in $\varepsilon$, the diagonal energy-resolved slope and the off-diagonal spectral function acquire energy dependence proportional to $\varepsilon\operatorname{Tr}(AH_1)$ and $\varepsilon\operatorname{Tr}(A^2H_1)(2E-\omega)$, respectively. Unlike Refs.~\cite{H26,H26b}, which analyze the energy-dependent ETH structure through resolvent hierarchies, here we use the exactly solvable random-free-fermion ensemble to identify the statistical origin of energy-independent and energy-dependent ETH and to connect it directly to dynamical thermalization.
Recently, Vallini {\it et. al.} \cite{Vallini2026} developed a
complementary, ensemble-level refinement of ETH based on the principle of
local rotational invariance. In their construction a \emph{global}
Haar rotation $A\mapsto U^{\dagger}AU$ of the observable yields a completely
random eigenbasis and an energy-independent (featureless) ETH structure,
whereas restricting the random basis transformations to windows of width
$\Delta$ in energy restores the smooth energy dependence of the ETH
functions and reorganizes the $1/D$ subleading corrections to the
factorization in terms of  local free cumulants. The random free fermion
model studied here realizes the \emph{global} case exactly in the sense that its
Haar-distributed eigenvector matrix is statistically independent of the
spectrum, while the controlled deformation $H(\varepsilon)=H_0+\varepsilon
H_1$ is a microscopic mechanism of the second, \emph{local} type, admixing predominantly eigenstates that are close in energy. Our
results may therefore be read as an exactly solvable complementation of their general
free-probability treatment.

As supporting dynamics, we compute the  fluctuations of correlation functions and compare the fluctuation scales with those of a full GOE random-matrix reference without particle-number conservation, revealing quantitative differences rooted in the finite-dimensional Haar statistics of the eigenstates; the single-particle spectral form factor then isolates the subleading $O(1/N)$ late-time fingerprints of two-level spectral correlations.

The central organizing question of this paper is: \emph{which statistical ingredient of the random Hamiltonian is responsible for which physical layer of thermalization and ETH?} We answer it with a four-way decomposition: (i) the Haar statistics of the eigenvectors set the ETH fluctuation scale and the off-diagonal amplitude; (ii) the one-point spectral density fixes the leading relaxation envelope $g^2(t/\tau_\lambda)$; (iii) eigenvector-eigenvalue correlations provide a controlled mechanism for generating the energy dependence of the ETH structure; and (iv) two-level spectral correlations contribute, for the observables and time regimes analyzed here, only the subleading late-time corrections. In the undeformed GOE ensemble only (i) and (ii) are active, which is precisely why the model realizes an exactly energy-independent (weak) ETH, while the controlled deformation $\varepsilon H_1$ switches on (iii) at first order in $\varepsilon$. The main text is organized around this decomposition: Sec.~\ref{sec:thermalization} establishes layers (i) and (ii), Sec.~\ref{sec:conservation} layers (i)--(iii), and Sec.~\ref{sec:dynamics} the dynamical fingerprints of layer (iv).

The paper is organized as follows. Section~\ref{sec:model} defines the random free fermion model and recalls its basic statistical properties. Section~\ref{sec:thermalization} contains the core thermalization analysis, in which we compute the ensemble average of $\operatorname{Tr}(A\rho(t))$, derive the relaxation envelope $g(t/\tau_\lambda)$, and establish that the variance of the fluctuations vanishes as $O(1/N)$. Particle number conservation is treated in Sec.~\ref{sec:conservation}, where we connect the filling fraction $f = N_p/N$ and the energy density to an effective temperature and chemical potential, where we prove the factorization theorem for the energy-shell statistics of the many-body eigenstates together with the extension to $k$-body observables, and where a controlled eigenvector-eigenvalue correlation deformation generates the energy-dependent ETH structure at first order. Section~\ref{sec:dynamics} is devoted to the dynamics of correlation functions and the single-particle spectral form factor. We conclude in Sec.~\ref{sec:conclusion} with a summary and outlook. 

\section{Random free fermions}
\label{sec:model}

We consider the random free fermion Hamiltonian   \cite{Magan2016}
\begin{equation}
H = \alpha \sum_{i=1}^{N} c_i^\dagger c_i + \eta \sum_{i,j=1}^{N} c_i^\dagger V_{ij} c_j,
\label{eq:H}
\end{equation}
where $c_i^\dagger$, $c_i$ are fermionic creation and annihilation operators, and $V=(V_{ij})$ is a real \emph{symmetric} $N\times N$ random matrix drawn from the Gaussian Orthogonal Ensemble (GOE) in Mehta's normalization, i.e. the matrix elements $V_{ij}$ with $i\le j$ are independent real Gaussian variables with zero mean and variance $1+\delta_{ij}$ (unit variance for $i<j$, variance $2$ on the diagonal), and $V_{ji}=V_{ij}$. With this normalization the average level density is the semicircle of radius $R=2\sqrt{N}$,  and the eigenvector matrix is distributed according to the Haar measure on $O(N)$. The parameters $\alpha$ and $\eta$ set the energy scales.

Because the Hamiltonian is quadratic, it can be diagonalized exactly. Let $\epsilon_a$ ($a=1,\ldots,N$) be the eigenvalues of $V$, and let $\psi_i^a$ be the components of the corresponding orthonormal eigenvectors. Then
\begin{equation}
H = \sum_{a=1}^N (\alpha + \eta\epsilon_a) \, d_a^\dagger d_a \equiv \sum_{a=1}^N E_a \, d_a^\dagger d_a,
\label{eq:Hdiag}
\end{equation}
with $d_a = \sum_i \psi_i^a c_i$. The many-body eigenstates are Slater determinants constructed from $N_p$ distinct single-particle orbitals,
\begin{equation}
|\Psi^{N_p}\rangle = \prod_{a\in\mathcal{A}} d_a^\dagger |0\rangle, \qquad \mathcal{A}\subset\{1,\ldots,N\},\; |\mathcal{A}| = N_p.
\end{equation}
The corresponding many-body energy is $E_{\mathcal{A}} = \sum_{a\in\mathcal{A}} E_a$.

The eigenvector matrix $(\psi_i^a)$ is exactly Haar distributed on $O(N)$, and it is statistically independent of the eigenvalues---two separate properties: the former fixes the marginal law of each eigenvector, the latter the joint law of eigenvectors and spectrum. The second moment
\begin{equation}
\langle \psi_i^a \psi_j^b \rangle = \frac{1}{N}\,\delta_{ab}\delta_{ij},
\label{eq:psicov}
\end{equation}
is exact, where $\langle \cdot \rangle$ denotes the ensemble average over $V$. In the large-$N$ limit the entries may be treated, to leading order, as independent Gaussian variables, and higher moments are then obtained by Wick contractions. However, this Gaussian approximation misses corrections of relative order $1/N$, at which we will work below. Wherever $O(1/N)$ accuracy matters we therefore use the exact moments of the Haar measure on $O(N)$ (the exact Weingarten functions), collected in Appendix~\ref{app:contractions}.

The eigenvalues $\epsilon_a$ follow Wigner's semicircle law \cite{Mehta2004}
\begin{equation}
\rho_\epsilon(\epsilon) = \frac{2}{\pi R^2}\sqrt{R^2 - \epsilon^2}, \qquad R = 2\sqrt{N},
\end{equation}
so that the single-particle energies $E_a = \alpha + \eta\epsilon_a$ have the average density
\begin{equation}
\rho(E) = \frac{2}{\pi (2\eta\sqrt{N})^2}\sqrt{(2\eta\sqrt{N})^2 - (E-\alpha)^2}, \qquad |E-\alpha| \le 2\eta\sqrt{N}.
\label{eq:rhoE}
\end{equation}
The semicircle law is the $N\to\infty$ limit of the average one-point density: at finite $N$ the density carries corrections $\delta\rho_N$ that we do not need to evaluate explicitly, and any quantity below computed with $\rho(E)$ is understood at the large-$N$ leading order.

Throughout the paper we distinguish three tiers of precision: (i) \emph{exact finite-$N$} statements, namely the Haar/Weingarten moments, initial conditions, and finite-population identities, which hold for every $N$; (ii) \emph{large-$N$ leading-order} statements obtained from the semicircular limiting density, such as the envelope $g(t/\tau_\lambda)$ and its $t^{-3}$ tail; and (iii) \emph{subleading corrections}, of order $1/N$, $O(1/N^2)$, or first order in the deformation parameter $\varepsilon$. Numerical data serve as verification of the formulas, not as proof.

The first two moments of the energy of an $N_p$-particle eigenstate are \cite{Magan2016}
\begin{align}
\Bigl\langle \sum_{a\in\mathcal{A}} E_a \Bigr\rangle &= N_p\alpha, \\
\Bigl\langle \Bigl(\sum_{a\in\mathcal{A}} E_a\Bigr)^2 \Bigr\rangle - \Bigl\langle \sum_{a\in\mathcal{A}} E_a \Bigr\rangle^2 &= \eta^2\, N_p\,(N - N_p + 2).
\end{align}
The variance follows from the exact identities $\sum_{a}\langle \epsilon_a^2\rangle = N(N+1)$ and $\langle (\operatorname{Tr} V)^2\rangle = 2N$, together with $\langle\epsilon_a\epsilon_b\rangle = -1$ for $a\neq b$; it reduces to $N_p N \eta^2$ at low filling, $N_p \ll N$.
These statistical properties form the basis for all subsequent calculations.

The model is analytically tractable because the eigenvector matrix is Haar distributed and statistically independent of the eigenvalues: all higher-point functions are computable in closed form through the Weingarten calculus, and the semicircular level density enters through elementary Fourier transforms. It is this independence---rather than the semicircle law alone---that makes the model a controlled setting for separating the statistical ingredients of thermalization and ETH.

\section{Thermalization: ensemble average and fluctuations}
\label{sec:thermalization}
We now analyze the thermalization dynamics of the observable $\operatorname{Tr}(A\rho(t))$, where $A$ is a fixed single-particle Hermitian operator, and $\rho(t)$ is the time-dependent density matrix of the system,
\begin{equation}
\rho(t) = e^{-iHt/\hbar} \Pi \, e^{iHt/\hbar},
\end{equation}
with $\Pi$ a fixed statistical operator satisfying $\operatorname{Tr}\Pi = 1$. The entire time dependence and randomness come from the time-evolution operator $U(t) = e^{-iHt/\hbar}$. This section is a \emph{single-particle} random-matrix calculation as $A$, $\Pi$ and $U(t)$ act on the $N$-dimensional one-body space. Throughout this section we restrict to intensive single-particle observables normalized by $\operatorname{Tr}(A)=O(1)$ and $\operatorname{Tr}(A^2)=O(1)$; the case of extensive observables is commented on at the end of Sec.~\ref{sec:variance}. The genuinely many-body statements of the eigenstate thermalization hypothesis, which involve the $N_p$-particle Slater-determinant eigenstates and their energy-resolved statistics, are taken up in Sec.~\ref{sec:conservation}.

\subsection{Time evolution operator and its moments}\label{3.1}
In the single-particle basis, the matrix elements of $U(t)$ are
\begin{equation}
U_{ij}(t) = \sum_{a=1}^N \psi_i^a \psi_j^a e^{-iE_a t/\hbar}.
\label{eq:Udef}
\end{equation}
We first average over the eigenvectors $\psi_i^a$, keeping the eigenvalues $E_a$ fixed.
Using \eqref{eq:psicov}, we obtain the first moment
\begin{equation}
\langle U_{ij}(t) \rangle_\psi = \delta_{ij} f(t), \qquad f(t) \equiv \frac{1}{N}\sum_a e^{-iE_a t/\hbar}.
\label{eq:U1}
\end{equation}
For the second moment, we compute $\langle U_{ij}(t) U_{kl}^*(t) \rangle_\psi$. Instead of the Gaussian (Wick) approximation,  we use the exact fourth moment of the Haar measure on $O(N)$, i.e.\ the exact Weingarten function given in Appendix~\ref{app:contractions}. A short calculation gives
\begin{equation}
\langle U_{ij}(t) U_{kl}^*(t) \rangle_\psi
= \Bigl[ |f(t)|^2 + \frac{2\bigl(|f(t)|^2-1\bigr)}{(N-1)(N+2)} \Bigr] \delta_{ij}\delta_{kl} + \frac{N\bigl(1-|f(t)|^2\bigr)}{(N-1)(N+2)}\bigl( \delta_{ik}\delta_{jl} + \delta_{il}\delta_{jk} \bigr).
\label{eq:U2psi}
\end{equation}
The connected (correlated) part is
\begin{equation}
\langle U_{ij}(t) U_{kl}^*(t) \rangle_{\psi,\mathrm{corr}}
= \frac{2\bigl(|f(t)|^2-1\bigr)}{(N-1)(N+2)}\,\delta_{ij}\delta_{kl}
+ \frac{N\bigl(1-|f(t)|^2\bigr)}{(N-1)(N+2)}\bigl( \delta_{ik}\delta_{jl} + \delta_{il}\delta_{jk} \bigr).
\label{eq:U2corr}
\end{equation}
Unlike the Gaussian approximation, this expression is consistent with the exact initial condition $U(0)=I$: At $t=0$ we have $|f(0)|^2=1$, and \eqref{eq:U2psi} reduces exactly to $\delta_{ij}\delta_{kl}$, as it should be. The connected part is time-dependent and vanishes at $t=0$; only at late times, when $|f(t)|^2\to 0$, does it approach the Gaussian-approximation value $\frac{1}{N}\bigl(\delta_{ik}\delta_{jl}+\delta_{il}\delta_{jk}\bigr)$. The factor $\bigl(1-|f(t)|^2\bigr)$, which the Wick contraction would have missed, is essential for the $O(1/N)$ accuracy used in the later variance analysis.  

Next, we average over the eigenvalues $E_a$. Using the average level density \eqref{eq:rhoE}, we find
\begin{equation}
\langle f(t) \rangle = \int dE\,\rho(E)\, e^{-iEt/\hbar} = e^{-i\alpha t/\hbar} \frac{2}{\pi} \int_{-1}^{1} dx\,\sqrt{1-x^2}\, e^{-i(2\lambda t/\hbar)x},
\end{equation}
where $\lambda \equiv \eta\sqrt{N}$. The integral is the Fourier transform of the semicircle, giving
\begin{equation}
\langle f(t) \rangle = e^{-i\alpha t/\hbar}\, g\!\left(\frac{t}{\tau_\lambda}\right), \qquad g(z) = \frac{2J_1(z)}{z},
\label{eq:favg}
\end{equation}
with $J_1(z)$ the Bessel function and
 the characteristic time scale
\begin{equation}
\tau_\lambda \equiv \frac{\hbar}{2\lambda} = \frac{\hbar}{2\eta\sqrt{N}}.
\label{eq:tau_lambda}
\end{equation}
Equation~\eqref{eq:favg} is a large-$N$ leading-order statement in the sense of Sec.~\ref{sec:model}: in general $\langle f(t)\rangle = e^{-i\alpha t/\hbar}g(t/\tau_\lambda) + \delta f_N(t)$, where $\delta f_N(t)$ is the finite-$N$ correction from the difference between the average one-point density and the semicircle; its explicit form is not needed in what follows.
The average of $|f(t)|^2$ splits into an uncorrelated part and a part coming from eigenvalue correlations:
\begin{equation}
\langle |f(t)|^2 \rangle = |\langle f(t) \rangle|^2 + \langle |f(t)|^2 \rangle_{\mathrm{corr}}.
\end{equation}
The correlated part involves the GOE two‑level correlation function and is of order $1/N$ \cite{Weidenmuller2024a}. Hence, to leading order in $1/N$,
\begin{equation}
\langle |f(t)|^2 \rangle = |g(t/\tau_\lambda)|^2 + O(1/N).
\label{eq:f2avg}
\end{equation}

\subsection{Ensemble average of $\operatorname{Tr}(A\rho(t))$}

We now assemble the results to compute $\langle \operatorname{Tr}(A\rho(t)) \rangle$. Following Weidenm\"uller \cite{Weidenmuller2024a}, we write it as the sum of uncorrelated and correlated parts,
\begin{equation}
\langle \operatorname{Tr}(A\rho(t)) \rangle = \operatorname{Tr}\!\Bigl[ A \langle U(t) \rangle_\psi \Pi \langle U^\dagger(t) \rangle_\psi \Bigr] + \operatorname{Tr}\!\Bigl[ A \langle U(t)\Pi U^\dagger(t) \rangle_{\psi,\mathrm{corr}} \Bigr].
\end{equation}
Using \eqref{eq:U1} and the exact expression \eqref{eq:U2corr}, the first term gives $|f(t)|^2 \operatorname{Tr}(A\Pi)$, while the second term yields $\frac{2(|f(t)|^2-1)\operatorname{Tr}(A\Pi) + N(1-|f(t)|^2)\bigl[\operatorname{Tr}(A) + \operatorname{Tr}(A\Pi)\bigr]}{(N-1)(N+2)}$, since $\operatorname{Tr}(A^{\mathsf{T}}\Pi)=\operatorname{Tr}(A\Pi)$ for Hermitian $A$ and $\Pi$. Thus, after averaging over the eigenvectors, the result is exact and takes the compact form
\begin{align}
\bigl\langle \operatorname{Tr}(A\rho(t)) \bigr\rangle_\psi
&= \gamma \,|f(t)|^2 + \mu\,\bigl(1-|f(t)|^2\bigr),
\label{eq:main_avg_psi}
\end{align}
with the deterministic coefficients
\begin{equation}
\gamma = \operatorname{Tr}(A\Pi),
\qquad
\mu = \frac{N\,\operatorname{Tr}(A) + (N-2)\operatorname{Tr}(A\Pi)}{(N-1)(N+2)}
= \frac{\operatorname{Tr}(A) + \operatorname{Tr}(A\Pi)}{N}\,\bigl[1+O(1/N)\bigr].
\label{eq:gamma_mu}
\end{equation}
Note that at $t=0$ (where $|f(0)|^2=1$) the right‑hand side equals $\operatorname{Tr}(A\Pi)$ \emph{exactly}, as required by $\rho(0)=\Pi$. The Gaussian approximation would have violated this initial condition by an $O(1/N)$ term. 

Averaging over the eigenvalues and using the decomposition \eqref{eq:f2_RMT} in Appendix \ref{PE}, i.e.\ $\langle |f(t)|^2\rangle = |g(t/\tau_\lambda)|^2 + \frac1N\mathcal{F}(t/\tau_d) + \frac1N\bigl(1-|g(t/\tau_\lambda)|^2\bigr)$, we obtain, to order $1/N$,
\begin{align}
\bigl\langle \operatorname{Tr}(A\rho(t)) \bigr\rangle
&= |g(t/\tau_\lambda)|^2 \operatorname{Tr}(A\Pi)
 + \frac{1-|g(t/\tau_\lambda)|^2}{N}\bigl[ \operatorname{Tr}(A) + \operatorname{Tr}(A^{\mathsf{T}}\Pi) + \operatorname{Tr}(A\Pi) \bigr] \notag \\
&\quad + \frac{1}{N}\,\mathcal{F}\!\left(\frac{t}{\tau_d}\right) \operatorname{Tr}(A\Pi) + O(N^{-2}) + O\bigl(N^{-2}\operatorname{Tr}(A)\bigr).
\label{eq:main_avg}
\end{align}
Here $\mathcal{F}(t/\tau_d)$ is the two-level function of order unity defined in \eqref{eq:spectral_correction}--\eqref{eq:F_def}, arising from the eigenvalue two‑level correlations; it is the dynamical signature of spectral chaos. For observables normalized by $\operatorname{Tr}(A)=O(1)$ the error terms are uniformly $O(N^{-2})$.
For an intensive observable with $\operatorname{Tr}(A)=O(1)$, the first term $\operatorname{Tr}(A\Pi)|g|^2$ is the leading dynamics and all remaining terms of \eqref{eq:main_avg}, including the equilibrium value $\frac{1}{N}\operatorname{Tr}(A)$, are uniformly $O(1/N)$. For an extensive observable with $\operatorname{Tr}(A)=O(N)$, the equilibrium value $\frac{1}{N}\operatorname{Tr}(A)$ is $O(1)$ and both terms of the large-$N$ limit must be kept:
\begin{equation}
\bigl\langle \operatorname{Tr}(A\rho(t)) \bigr\rangle \;\longrightarrow\; \frac{1}{N}\operatorname{Tr}(A) + \operatorname{Tr}(A\Pi) \, |g(t/\tau_\lambda)|^2.
\label{eq:limit}
\end{equation}
The constant term $\frac{1}{N}\operatorname{Tr}(A)$ is the infinite‑temperature equilibrium value $\operatorname{Tr}(A\rho_{\mathrm{eq}}(\infty))$, reflecting the orthogonal invariance of the ensemble. The time‑dependent term decays from the initial value $\operatorname{Tr}(A\Pi)$ at $t=0$ to zero as $|t|\to\infty$, 
with the relaxation envelope $g^2(t/\tau_\lambda)$ given by \eqref{eq:favg}. The limit in \eqref{eq:limit} is understood at fixed scaled time $s=t/\tau_\lambda$ and matches the exact initial condition at $t=0$ (where the $O(1/N)$ corrections of \eqref{eq:main_avg} vanish identically); the large-$N$ and long-time limits are not interchangeable, and their relation is discussed in Sec.~\ref{sec:variance}.
For large $|t|$, we use the asymptotic expansion of the Bessel function $J_1(z)$:
\begin{equation}
J_1(z) \sim \sqrt{\frac{2}{\pi z}} \cos\!\left(z - \frac{3\pi}{4}\right), \qquad z\to\infty.
\end{equation}
Hence
\begin{equation}
g(z) \sim \sqrt{\frac{8}{\pi}} \, \frac{\cos(z - 3\pi/4)}{z^{3/2}}, \quad\text{and}\quad g^2(z) \sim \frac{8}{\pi} \frac{\cos^2(z - 3\pi/4)}{z^3}.
\end{equation}
Therefore,
\begin{equation}
g^2(t/\tau_\lambda) = O\!\left( \frac{1}{|t|^3} \right), \qquad |t|\to\infty.
\label{eq:g2asym}
\end{equation}
The oscillatory $|t|^{-3}$ tail follows directly from the square-root edges of the semicircular limiting density \eqref{eq:rhoE} in the large-$N$ regime; it is a property of the semicircular edge, not a model-independent law.

\subsection{Variance and the vanishing of fluctuations}
\label{sec:variance}

To show that the relaxation profile self-averages over realizations, we estimate the variance of $\operatorname{Tr}(A\rho(t))$,
\begin{equation}
\operatorname{Var}\!\bigl( \operatorname{Tr}(A\rho(t)) \bigr) = \bigl\langle \operatorname{Tr}(A\rho(t)) \operatorname{Tr}(A\rho(t)) \bigr\rangle - \bigl\langle \operatorname{Tr}(A\rho(t)) \bigr\rangle^2.
\label{eq:variance}
\end{equation}
We follow the systematic analysis of Weidenm\"uller \cite{Weidenmuller2024a} and classify the contributions according to how the factors $U(t)$ and $U^\dagger(t)$ are contracted when averaging over the eigenvectors $\psi_i^a$. The contractions yield terms proportional to various products of the functions $f(t)$, $\langle f(t) \rangle$, and the connected parts of higher moments. 
 The key observation is that every connected contribution carries at least one explicit factor $1/N$, rendering the variance $O(1/N)$.
 
  In the classification below we use the Gaussian (Wick) contraction rules for the higher cumulants of $U$, which reproduce the exact Haar moments only up to corrections of relative order $1/N$. This approximation affects only the $O(1/N^2)$ and $O(1/N^3)$ pieces (cases (iv) and (v)); the leading channels identified below---the spectral channel \eqref{eq:var_spectral} and the leading eigenvector-fluctuation channel built from the exact connected second moment---are exact as stated, so the parametric statement $\operatorname{Var} = O(1/N)$ is robust. The exact second moment \eqref{eq:U2psi}, which carries the essential $(1-|f|^2)$ structure, is used throughout.

Expanding the product of the two identical traces $\operatorname{Tr}(A U(t)\Pi U^\dagger(t))$, we have four $U$'s and four $U^\dagger$'s. The average over $\psi$ is performed by applying Wick's theorem to all possible pairings of the eight Gaussian variables. We group the resulting contributions into five cases, following Weidenm\"uller \cite{Weidenmuller2024a}:
\begin{itemize}
    \item \textbf{Case (i):} In each trace, the two factors $U$ and $U^\dagger$ are averaged independently, or they are replaced by the correlated part $\langle UU^\dagger\rangle_{\mathrm{corr}}$. The only non‑vanishing contribution to the variance then comes from eigenvalue correlations, i.e.\ from the fluctuations of the eigenvector-averaged value \eqref{eq:main_avg_psi}. Explicitly, the leading piece is
    \begin{equation}
        \Bigl( \bigl\langle |f(t)|^2 |f(t)|^2 \bigr\rangle - \bigl\langle |f(t)|^2 \bigr\rangle^2 \Bigr) \bigl( \operatorname{Tr}(A\Pi) \bigr)^2.
        \label{eq:casei}
    \end{equation}
    The variance of $|f(t)|^2$ involves the GOE two‑, three‑, and four‑point level correlation functions. At fixed scaled time it is $O(1/N^2)$ rather than $O(1/N)$: in the cumulant expansion the disconnected pieces cancel, and the leading term is built from products of two-point functions,
    \begin{equation}
        \frac{2}{N^2}\,\Bigl[ \bigl(1-|g|^2 + \mathcal{F}\bigr)^2 + \bigl| f(2t) - f(t)^2 + \mathcal{F}_2\bigr|^2 + \cdots \Bigr],
    \end{equation}
    where $\mathcal{F}$ is defined in \eqref{eq:spectral_correction} and $\mathcal{F}_2$ is the analogous Fourier transform of the two-level cluster function weighted by $e^{-i(E_a+E_b)t}$; the three‑ and four‑point functions enter only at $O(1/N^3)$ and beyond. The spectral channel \eqref{eq:var_spectral} is therefore $O(1/N^2)\times\bigl(\gamma-\mu\bigr)^2$ at fixed scaled time $s=t/\tau_\lambda$ (this estimate does not extend to Heisenberg-time scales, where $t/\tau_d\sim 1$ and the two-level correlations become of order one), and it decays further at late times. The cross- and double-correlated replacements complete the coefficient $\bigl(\operatorname{Tr}(A\Pi)\bigr)^2$ into $\bigl(\gamma-\mu\bigr)^2$, see \eqref{eq:var_spectral} below.
    
    \item \textbf{Case (ii):} One factor $U$ or $U^\dagger$ in each trace is replaced by its average, and the remaining two factors are taken from the correlated part of the second moment. There are four possibilities. A representative example is
    \begin{align}
        U^\dagger(t) &\to \langle U^\dagger(t)\rangle,\quad U(t) \to \langle U(t)\rangle, \\
        U(t) U^\dagger(t) &\to \nu_1,
    \end{align}
    where $\nu_1 = \langle U(t)U^\dagger(t)\rangle_{\mathrm{corr}}$ is given in \eqref{eq:U2corr}; it carries the factor $\bigl(1-|f(t)|^2\bigr)/N$. This yields a term proportional to
    \begin{equation}
        \frac{1-|f(t)|^2}{N}\, \operatorname{Tr}(A^2 \Pi^2) \, |g(t/\tau_\lambda)|^2.
    \end{equation}
    For an intensive observable normalized by $\operatorname{Tr}(A^2)=O(1)$, we have $\operatorname{Tr}(A^2\Pi^2)\le \operatorname{Tr}(A^2)=O(1)$, so this term is $O(1/N)$ at generic times. It vanishes identically at $t=0$, where the connected part of the exact second moment \eqref{eq:U2corr} is zero; indeed at $t=0$ the whole eigenvector-fluctuation channel vanishes because $U(0)=I$ exactly, and the variance then vanishes exactly, as required by the deterministic initial value $\operatorname{Tr}(A\rho(0))=\operatorname{Tr}(A\Pi)$. 
    
    \item \textbf{Case (iii):} Each trace is replaced by the correlated part of the second moment. That is,
    \begin{equation}
        \langle U(t)U^\dagger(t)\rangle_{\mathrm{corr}} \quad \text{and} \quad \langle U(t)U^\dagger(t)\rangle_{\mathrm{corr}}.
    \end{equation}
    Using \eqref{eq:U2corr}, this gives a contribution proportional to
    \begin{equation}
        \frac{\bigl(1-|f(t)|^2\bigr)^2}{N^2} \bigl[ \operatorname{Tr}(A^2)\operatorname{Tr}(\Pi^2) + \operatorname{Tr}(A)^2 \operatorname{Tr}(\Pi^2) \bigr] \quad \text{or} \quad \frac{\bigl(1-|f(t)|^2\bigr)^2}{N^2} \operatorname{Tr}(A\Pi)^2,
    \end{equation}
    all of which are $O(1/N^2)$ (since $\operatorname{Tr}(A^2)$ and $\operatorname{Tr}(\Pi^2)$ are $O(1)$).

    \item \textbf{Case (iv):} The three‑factor cumulant of $U$ appears, with the remaining factor averaged. This involves products of three $U$'s (or two $U$'s and one $U^\dagger$) and is proportional to $1/N^2$ times traces like $\operatorname{Tr}(\Pi A A^\mathsf{T})$ or $\operatorname{Tr}(A)\operatorname{Tr}(A\Pi \Pi^\mathsf{T})$, which are $O(1)$ or smaller for the normalized observable; hence this case contributes $O(1/N^2)$.

    \item \textbf{Case (v):} The four‑factor cumulant of $U$ appears. This is proportional to $1/N^3$ times traces such as $\operatorname{Tr}(A^2)\operatorname{Tr}(A\Pi)\operatorname{Tr}(A)$ or $\operatorname{Tr}(\Pi^2)[\operatorname{Tr}(A)]^2$, all $O(1/N^3)$.
\end{itemize}

Collecting all contributions, it is most transparent to organize the variance by the law of total variance,
\begin{equation}
\operatorname{Var}\!\bigl( \operatorname{Tr}(A\rho(t)) \bigr)
= \Bigl\langle \operatorname{Var}_\psi\bigl[ \operatorname{Tr}(A\rho(t)) \bigr] \Bigr\rangle_E
+ \operatorname{Var}_E\Bigl[ \bigl\langle \operatorname{Tr}(A\rho(t)) \bigr\rangle_\psi \Bigr],
\label{eq:variance_split}
\end{equation}
where $\langle\cdot\rangle_\psi$ is the average over the eigenvectors and $\langle\cdot\rangle_E$ the average over the eigenvalues. The first term is the eigenvector-fluctuation channel estimated by cases (ii)--(v): its leading piece is built from the exact connected second moment \eqref{eq:U2corr} together with the exact unconjugated pairing \eqref{eq:UUncorr}, carries a factor $\bigl(1-|f(t)|^2\bigr)|f(t)|^2$ times $O(1)$ traces such as $\operatorname{Tr}(A^2\Pi^2)$, and is $O(1/N)\times\operatorname{Tr}(A^2\Pi^2)$ at generic times, while the three- and four-cumulant pieces are $O(1/N^2)$ and $O(1/N^3)$; the whole channel vanishes identically at $t=0$, since $U(0)=I$ exactly. The second term is the spectral channel of case (i) and is evaluated exactly: by \eqref{eq:main_avg_psi} the eigenvector average is the affine function $\gamma|f(t)|^2 + \mu\bigl(1-|f(t)|^2\bigr)$ of $|f(t)|^2$ with deterministic coefficients, so that
\begin{equation}
\operatorname{Var}_E\Bigl[ \bigl\langle \operatorname{Tr}(A\rho(t)) \bigr\rangle_\psi \Bigr]
= \bigl(\gamma-\mu\bigr)^2 \, \operatorname{Var}\!\bigl( |f(t)|^2 \bigr),
\label{eq:var_spectral}
\end{equation}
where $\operatorname{Var}\bigl(|f(t)|^2\bigr) = \bigl\langle |f(t)|^4\bigr\rangle - \bigl\langle |f(t)|^2\bigr\rangle^2$ is $O(1/N^2)$ at fixed scaled time, as shown in case (i). The variance is therefore governed by two channels of different origin and different parametric size. For an initial state with an $O(1)$ overlap of $\Pi$ with the support of $A$---e.g.\ a laboratory-basis state localized where the observable acts, so that $\gamma$ and $\operatorname{Tr}(A^2\Pi^2)$ are of order one---the eigenvector-fluctuation channel is $O(1/N)$ and \emph{dominates}, while the spectral channel is subleading, $O(1/N^2)$; for Haar-typical initial states with $\gamma,\operatorname{Tr}(A^2\Pi^2)=O(1/N)$, both channels are $O(1/N^2)$. In all cases the upper bound $\operatorname{Var}=O(1/N)$ holds for intensive observables, and the coefficient is determined \emph{not} by spectral statistics alone but by the sum of the two channels. Both channels vanish exactly at $t=0$---the eigenvector-fluctuation channel because $U(0)=I$, the spectral channel because $|f(0)|^2=1$---so the exact initial condition is respected. At late times the $O(1/N)$ channel decays---$|f(t)|^2$ approaches $1/N$ and $f(t),\,f(2t)$ approach zero---and the variance approaches the fluctuation level of the diagonal-ensemble value $\sum_a A_{aa}\Pi_{aa}$, which is only $O(1/N^3)$. We therefore state the result as
\begin{equation}
\operatorname{Var}\!\bigl( \operatorname{Tr}(A\rho(t)) \bigr) = O(1/N),
\label{eq:var_final}
\end{equation}
a parametric statement whose robustness does not require the Gaussian estimates of cases (iv) and (v). Hence, for each fixed scaled time $t/\tau_\lambda$ (i.e., $t$ scaling as $\tau_\lambda \sim N^{-1/2}$), the variance vanishes as $N^{-1}$ in the thermodynamic limit, and the fluctuation amplitude is $O(N^{-1/2})$: at any fixed scaled time,
\begin{equation}
\operatorname{Tr}(A\rho(t)) \longrightarrow \frac{1}{N}\operatorname{Tr}(A) + \operatorname{Tr}(A\Pi) \, |g(t/\tau_\lambda)|^2
\end{equation}
in probability (equivalently, in mean square). The relaxation profile is therefore \emph{self-averaging} in the large-$N$ limit, with negligible sample‑to‑sample variations.

The estimates above assumed that $\operatorname{Tr}(A)$ and $\operatorname{Tr}(A^2)$ are of order unity, i.e., the observable is normalized to have a finite thermodynamic expectation value. This is the physically relevant case (e.g., local densities). For extensive observables with $\operatorname{Tr}(A) \sim N$, the variance would be $O(1)$ and the fluctuations would not vanish; however, such observables are not considered in typical thermalization studies because their expectation values diverge in the thermodynamic limit. In the context of ETH, one always considers intensive or local observables.

Throughout this section the term ``thermalization'' is used in the BGS-inspired random-matrix sense of Sec.~\ref{sec:intro}: relaxation of the ensemble average toward the ensemble equilibrium value \eqref{eq:limit}, together with self-averaging of the fluctuations. The physical many-body statement---that fixed-particle-number eigenstates reproduce finite-temperature grand-canonical expectation values of few-body observables---is established in Sec.~\ref{sec:conservation}.

\subsection{Distinguishing eigenbasis chaos from spectral chaos}
\label{sec:eigenbasis_vs_spectral}
The above analysis of the random free fermion model provides a concrete, analytically tractable illustration of the dichotomy between the {\it eigenbasis chaos}, which controls early-time dynamics such as thermalization and the growth of  OTOC, and the {\it spectral chaos}, which governs very late-time phenomena like the ramp and plateau of the spectral form factor \cite{MaganWu2024}. Since the Hamiltonian of this model contains a random-matrix one-body part \eqref{eq:H}, we see that  its eigenstates satisfy ETH \cite{Magan2016}, and 
the eigenvalues also follow Wigner's semicircle law and display the level repulsion characteristic of random matrix theory \cite{GMW98,Mehta2004,Haake2010}.  However,  the dynamical consequences of these two types of chaos appear on very different time scales and with different parametric dependencies. Throughout, by \emph{eigenbasis chaos} we mean the randomness of the eigenvectors in the laboratory basis, characterized here by Haar statistics, and by \emph{spectral chaos} the fine-grained eigenvalue correlations, i.e., GOE level repulsion and the associated two-level cluster function.

The ensemble average of the observable $\operatorname{Tr}(A\rho(t))$ is given by Eq.~\eqref{eq:main_avg}.  To leading order in $1/N$, this simplifies to
\begin{equation}
\bigl\langle \operatorname{Tr}(A\rho(t)) \bigr\rangle = \operatorname{Tr}(A\Pi) \, |g(t/\tau_\lambda)|^2 + \frac{1-|g(t/\tau_\lambda)|^2}{N}\operatorname{Tr}(A) + O(1/N),
\label{eq:leading_thermalization}
\end{equation}
which matches the exact initial value $\operatorname{Tr}(A\Pi)$ at $t=0$---where the $O(1/N)$ correction vanishes identically---and approaches the infinite-temperature equilibrium $\frac{1}{N}\operatorname{Tr}(A)$ at late times. For observables normalized by $\operatorname{Tr}(A)=O(1)$ the second term is itself $O(1/N)$ and the leading form reduces to $\operatorname{Tr}(A\Pi)|g|^2$. The derivation of $g(z)$ in \eqref{eq:favg} involves only the average level density and the Haar statistics of the eigenvectors.
 The connected spectral correlations enter only through the subleading $O(1/N)$ corrections; at fixed scaled time these are subleading, but at sufficiently late times, when the leading envelope has decayed to $O(1/N)$, they become the dominant observable correction (see below).

The $O(1/N)$ terms in Eq.~\eqref{eq:main_avg} contain contributions from two distinct sources.  The first is the eigenbasis contribution $\frac{1-|g|^2}{N}\bigl[\operatorname{Tr}(A)+\operatorname{Tr}(A^{\mathsf{T}}\Pi)\bigr]$, which arises from the connected part of the \emph{exact} second moment \eqref{eq:U2corr}.  The second contribution comes from the correlated part of $\langle |f(t)|^2 \rangle$, i.e. from the eigenvalue two-point correlation function.  As shown in Appendix B of \cite{Weidenmuller2024a}, this correlated part is
\begin{equation}
\langle |f(t)|^2 \rangle_{\mathrm{corr}} = \frac{1}{N} \bigl[ 1 - \widetilde b(t/\tau_d) \bigr] = \frac{1}{N}\bigl(1-|g(t/\tau_\lambda)|^2\bigr) + \frac{1}{N} \, \mathcal{F}\!\left(\frac{t}{\tau_d}\right),
\label{eq:spectral_correction}
\end{equation}
where the smooth piece $\frac{1}{N}(1-|g|^2)$ is the same as in the Poissonian ensemble, and the additional two-level contribution is
\begin{equation}
\mathcal{F}(t/\tau_d) = |g(t/\tau_\lambda)|^2 - \widetilde b(t/\tau_d), \qquad \widetilde b(s) = \int_{-\infty}^{\infty} Y_2(y)\, e^{2\pi i s y}\, dy,
\label{eq:F_def}
\end{equation}
with $\tau_d = \hbar/d$ the Heisenberg time ($d$ the mean level spacing) and $\widetilde b(s)$ the Fourier transform of the GOE two-level cluster function $Y_2(y)$, normalized by $\widetilde b(0)=1$ and decaying on the scale $s\sim 1$ \cite{Mehta2004}.  Both functions are defined in the standard local (bulk) approximation: after unfolding the spectrum with the local mean level spacing, the two-level cluster function becomes translation invariant, and the semicircular edge regions, where the local spacing varies, contribute only at subleading order in $1/N$; the same convention is used in the numerical comparison of Sec.~\ref{sec:SFF_comparison}.  The function $\mathcal{F}(s)$ vanishes as $s\to 0$ and as $s\to\infty$: at $t=0$ the two terms in \eqref{eq:F_def} cancel identically, as required by the exact initial condition $\langle |f(0)|^2\rangle=1$, and at fixed $s>0$ the envelope $|g(t/\tau_\lambda)|^2=|g(s\tau_d/\tau_\lambda)|^2$ vanishes in the large-$N$ limit, so that $\mathcal{F}(s)\to -\widetilde b(s)$.  The two-level correlations therefore \emph{depress} the form factor below its smooth Poissonian value before the Heisenberg time---the correlation hole---and $\mathcal{F}$ is the dynamical signature of spectral chaos: it vanishes if the eigenvalues are uncorrelated (Poisson statistics), and its specific form depends on the universality class of the random matrix ensemble.

Thus, the full time evolution of $\operatorname{Tr}(A\rho(t))$ can be schematically written as
\begin{equation}
\operatorname{Tr}(A\rho(t)) = \underbrace{\frac{1}{N}\operatorname{Tr}(A) + \operatorname{Tr}(A\Pi) \, |g(t/\tau_\lambda)|^2}_{\text{eigenbasis chaos (leading)}} \;+\; \underbrace{\frac{1}{N}\,\mathcal{F}(t/\tau_d)\operatorname{Tr}(A\Pi) + \cdots}_{\text{spectral chaos (subleading)}}.
\end{equation}
The leading envelope dominates up to the crossover time $t_\times \sim \tau_\lambda N^{1/3}$, set by $|g(t/\tau_\lambda)|^2 \sim 1/N$. Because $\tau_d/\tau_\lambda \sim N$, this crossover time is still much smaller than the Heisenberg time for large $N$.  The spectral-correlation correction therefore becomes parametrically visible well before the Heisenberg time $\tau_d \sim N\tau_\lambda$: In the window $t_\times \lesssim t \lesssim \tau_d$ the $O(1/N)$ correction already exceeds the decaying leading envelope.  Beyond this window the correction varies on the Heisenberg scale and eventually saturates, preventing the expectation value from decaying exactly to zero and giving rise to persistent fluctuations of size $1/N$.  These fluctuations are the dynamical footprint of spectral chaos.

It is instructive to compare the above analysis of random free fermions with the analysis of Poissonian ensemble \cite{MaganWu2024} (cf. Appendix \ref{PE}).
The leading term $|\langle f(t)\rangle|^2$ is identical in both ensembles.  Since the relaxation envelope in Eq.~\eqref{eq:main_avg} is proportional to $|\langle f(t)\rangle|^2$, we conclude that the leading relaxation dynamics described by $g^2(t/\tau_\lambda)$ is insensitive to the fine-grained spectral statistics: within this comparison, which keeps the one-point density and the eigenvector statistics fixed and changes only the eigenvalue correlations, the leading envelope is identical in both ensembles.  This insensitivity is exactly what one expects from the distinction between eigenbasis chaos and spectral chaos: the leading relaxation is controlled by the eigenvector statistics together with the one-point spectral density, while the fine-grained spectral statistics affect only the subleading late-time corrections.

Although the leading relaxation envelope is the same, the fluctuations (variance) of $\operatorname{Tr}(A\rho(t))$ differ between the two ensembles at $O(1/N)$.  In our  model, the variance contains contributions from both the disconnected part (smooth) and the connected part (the two-level hole structure of $\mathcal{F}$).  In the Poissonian ensemble, only the smooth part survives.  The two time scales involved should be distinguished: the leading envelope falls below the $O(1/N)$ correction at the crossover time $t_\times \sim \tau_\lambda N^{1/3}$, whereas the detailed spectral-correlation structure $\mathcal{F}(t/\tau_d)$ evolves on the Heisenberg scale $t \sim \tau_d$.  At such long times the Poissonian ensemble exhibits a different spectral form factor (no correlation hole) and different late-time fluctuations.  However, for all practical purposes of thermalization, the two ensembles behave identically.  This reinforces the central message of our work: the random free fermion model captures the eigenbasis-chaos features of thermalization, and its predictions for the leading relaxation are robust against modifications of the fine-grained spectral statistics at fixed one-point density and eigenvector statistics.

\section{Many-body ETH and the origin of its energy dependence}
\label{sec:conservation}
This section establishes the many-body content of the statistical decomposition of Sec.~\ref{sec:intro}: it connects the single-particle dynamics to fixed-particle-number thermodynamics, derives the energy-independent ETH structure from eigenvector-eigenvalue independence (layers (i)--(ii)), and then switches on layer (iii) through a controlled correlation deformation.

\subsection{Thermodynamics at fixed particle number}
The  analysis so far was carried out in the single‑particle sector. To connect with the  many‑body system, we must consider states with a fixed particle number $N_p$. The full many‑body Hilbert space decomposes into orthogonal sectors labeled by $N_p$, each of dimension $\binom{N}{N_p}$. Within a fixed $N_p$ sector, the Hamiltonian is still diagonal in the Slater determinant basis.

For an initial state $|\Psi_0\rangle$ with  particle number $N_p$ and a narrow energy distribution, the long‑time average of $\operatorname{Tr}(A\rho(t))$ is given by the diagonal ensemble:
\begin{equation}
\overline{\operatorname{Tr}(A\rho(t))} = \sum_{\mathcal{A}} |\langle \mathcal{A}|\Psi_0\rangle|^2 \, \langle \mathcal{A}|A|\mathcal{A}\rangle,
\end{equation}
where the sum runs over many‑body eigenstates $|\mathcal{A}\rangle$ in the $N_p$-particle sector. The ETH-type behavior of the eigenstate expectation values of one-body observables in this model \cite{Magan2016} means that each $\langle \mathcal{A}|A|\mathcal{A}\rangle$ is close to the microcanonical average at energy $E_{\mathcal{A}}$ up to fluctuations of order $N^{-1/2}$ (in the sense of typical eigenstates; the exact fluctuation scale is given by \eqref{eq:covcorr} below). Therefore,
\begin{equation}
\overline{\operatorname{Tr}(A\rho(t))} \approx \operatorname{Tr}\!\bigl( \rho_{\mathrm{eq}}(\beta,\mu) A \bigr),
\end{equation}
where $\rho_{\mathrm{eq}}(\beta,\mu)$ is the grand canonical ensemble at inverse temperature $\beta$ and chemical potential $\mu$ determined by $N_p$ and the average energy.

To make this precise, we compute the average two‑point correlation function in a typical $N_p$-particle eigenstate. From the eigenvector statistics, we have
\begin{equation}
C_{ij}^{\Psi} \equiv \langle \Psi^{N_p}| c_i^\dagger c_j |\Psi^{N_p}\rangle = \sum_{a\in\mathcal{A}} \psi_i^a \psi_j^a.
\end{equation}
Averaging over the eigenvectors gives
\begin{equation}
\langle C_{ij}^{\Psi} \rangle = \frac{N_p}{N}\,\delta_{ij} = f\delta_{ij}, \qquad f \equiv \frac{N_p}{N}.
\label{eq:avgcorr}
\end{equation}
The fluctuations follow from the exact fourth moment \eqref{eq:haar4} of the Haar measure, and gives
\begin{equation}
\langle C_{ij}^{\Psi} C_{kl}^{\Psi} \rangle - \langle C_{ij}^{\Psi}\rangle \langle C_{kl}^{\Psi}\rangle
= \frac{N_p(N-N_p)}{(N-1)N(N+2)}
\left[ \delta_{ik}\delta_{jl} + \delta_{il}\delta_{jk} - \frac{2}{N}\,\delta_{ij}\delta_{kl} \right].
\label{eq:covcorr}
\end{equation}
For the diagonal entries this yields the exact variance of the occupation number,
\begin{equation}
\operatorname{Var}\bigl( C_{ii}^{\Psi} \bigr) = \frac{2f(1-f)}{N+2},
\end{equation}
in agreement with Mag\'an's result \cite{Magan2016}. Note the factor $(1-f)$, which is absent in the independent-Gaussian (Wick) approximation of the fourth moment. As in \eqref{eq:U2psi}, the exact Haar moment is essential at this order.

Now consider the grand canonical ensemble at temperature $\beta$ and chemical potential $\mu$. Its correlation function is
\begin{equation}
C_{ij}^{\beta,\mu} = \frac{1}{Z}\mathrm{Tr}\!\bigl( e^{-\beta(H-\mu N)} c_i^\dagger c_j \bigr) = \sum_{a} \psi_i^a \psi_j^a \, n_F(E_a-\mu),
\end{equation}
with $n_F(x) = (e^{\beta x}+1)^{-1}$. Averaging over eigenvectors yields
\begin{equation}
\langle C_{ij}^{\beta,\mu} \rangle = \bigl[ n_F(E-\mu) \bigr]_{\mathrm{avg}} \, \delta_{ij},
\end{equation}
where $[\,\cdot\,]_{\mathrm{avg}}$ denotes the average over the single‑particle energy distribution $\rho(E)$. Comparing with \eqref{eq:avgcorr}, we see that the two averages coincide if
\begin{equation}
f = \bigl[ n_F(E-\mu) \bigr]_{\mathrm{avg}}.
\label{eq:match}
\end{equation}
This is a single scalar equation and therefore determines a one-parameter family of pairs $(\beta,\mu)$ consistent with a given filling fraction $f$. A second condition is supplied by the average single-particle energy density $e = E_{\mathcal{A}}/N$ of the many-body state. For the eigenstate expectation values to reproduce the grand canonical average, we must also require
\begin{equation}
e = \bigl[ E\, n_F(E-\mu) \bigr]_{\mathrm{avg}}.
\label{eq:energy_match}
\end{equation}
The two equations \eqref{eq:match} and \eqref{eq:energy_match} determine the effective temperature and chemical potential as functions of the pair $(f,e)$; within the thermodynamic domain considered here (away from the boundaries of the semicircle and the filling), these equations determine $(\beta,\mu)$ uniquely.
For instance, in the high‑temperature limit $\beta\to0$, \eqref{eq:match} gives
\begin{equation}
f \approx \frac{1}{2} - \frac{\beta}{4}(\alpha - \mu) + O(\beta^2).
\end{equation}
The remaining combination is fixed by \eqref{eq:energy_match}, which at the same order reads
\begin{equation}
e \approx \frac{\alpha}{2} - \frac{\beta}{4}\bigl(\alpha^2 + \eta^2 N - \mu\alpha\bigr) + O(\beta^2).
\end{equation}

Thus, the value approached by few-body observables in the long‑time limit is the finite‑temperature grand canonical average, not the infinite‑temperature value appearing in the ensemble average \eqref{eq:limit}. The infinite‑temperature result emerges only after additionally averaging over all energies, which washes out the energy constraint.
This reconciles the random-matrix ensemble average with the physical expectation that the local observables of a system with fixed particle number and energy are reproduced, in the thermodynamic limit, by a grand canonical ensemble at some $\beta$ and $\mu$.

\subsection{Energy-shell ETH from eigenvector--eigenvalue independence}
Recall that, in ETH, 
the eigenstate expectation values of a few-body observable, when restricted to a {\it narrow} energy shell, concentrate around a smooth microcanonical function $O_{\mathrm{th}}(E)$, with fluctuations of order $e^{-S(E)/2}$ where $S(E)$ is the thermal entropy at energy $E$. For our model both the diagonal and the off-diagonal ETH structure can be computed, leading to a precise statement about \emph{which ingredient of a Hamiltonian generates which part of ETH}. 

The central structural result of this section is the following factorization theorem, which answers the question of \emph{why} the ETH function of this model is energy independent: not because the computed average happens to be constant, but because the eigenvectors and the eigenvalues are statistically independent.
\begin{theorem}[Factorization of the eigenstate statistics]
\label{thm:factorization}
For the GOE ensemble, the eigenvector matrix $U$ and the spectrum $\{E_a\}$ are statistically independent random variables. Consequently, for every fixed occupation set $\mathcal{A}$ and every one-body observable $A$,
\begin{equation}
P\bigl(O_{\mathcal{A}},\, E_{\mathcal{A}}\bigr) = P\bigl(O_{\mathcal{A}}\bigr)\, P\bigl(E_{\mathcal{A}}\bigr).
\label{eq:factorization}
\end{equation}
exactly. Therefore the conditional average is exactly energy-independent,
\begin{equation}
\frac{\partial}{\partial E}\,\mathbb{E}\bigl[ O_{\mathcal{A}} \,\big|\, E_{\mathcal{A}} = E \bigr] = 0,
\qquad
\mathbb{E}\bigl[ O_{\mathcal{A}} \,\big|\, E_{\mathcal{A}} = E \bigr] = \frac{N_p}{N}\operatorname{Tr}A,
\end{equation}
and the conditional variance equals the unconditional one, exactly,
\begin{equation}
\operatorname{Var}\bigl( O_{\mathcal{A}} \,\big|\, E_{\mathcal{A}} = E \bigr)
= \operatorname{Var}(O_{\mathcal{A}}).
\end{equation}
\end{theorem}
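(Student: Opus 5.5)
The proof has two parts. First we establish the independence of the eigenvector matrix and the spectrum at the level of the joint density. Then we observe that $O_{\mathcal{A}}$ and $E_{\mathcal{A}}$ are measurable functions of these two independent objects separately. Throughout, $O_{\mathcal{A}}\equiv\langle\Psi^{N_p}|\sum_{ij}A_{ij}c_i^\dagger c_j|\Psi^{N_p}\rangle=\sum_{a\in\mathcal{A}}(\psi^a)^{\mathsf T}A\,\psi^a=\operatorname{Tr}(A\,C^{\Psi})$ and $E_{\mathcal{A}}=\sum_{a\in\mathcal{A}}E_a$.

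For the first part, we write the GOE density in Mehta's normalization as $P(V)\,dV\propto e^{-\operatorname{Tr}V^2/4}\,dV$; this depends on $V$ only through its spectrum. We then change variables $V=U\,\mathrm{diag}(\epsilon_1,\dots,\epsilon_N)\,U^{\mathsf T}$ on the full-measure set of matrices with simple spectrum. To make the map single-valued we fix the labels by ordering $\epsilon_1<\dots<\epsilon_N$, and we quotient $O(N)$ by the column-sign group $\{\pm1\}^N$. The standard Jacobian computation gives
\[
dV=\prod_{a<b}|\epsilon_a-\epsilon_b|\;d\epsilon\;d\mu_{\mathrm{Haar}}(U),
\]
so the joint law becomes
\[
P(V)\,dV\propto\Bigl[e^{-\sum_a\epsilon_a^2/4}\prod_{a<b}|\epsilon_a-\epsilon_b|\,d\epsilon\Bigr]\,\bigl[d\mu_{\mathrm{Haar}}(U)\bigr],
\]
a product of a spectral factor and a Haar factor. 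This is the stated independence of $U$ and $\{E_a\}$, since $E_a=\alpha+\eta\epsilon_a$ is a deterministic function of $\epsilon_a$.

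Two labeling subtleties must be checked. First, $O_{\mathcal{A}}$ is quadratic in each $\psi^a$, so it is invariant under column sign flips and is a well-defined function on the quotient. Second, if one prefers a uniformly random labeling instead of the ordered one, the columns of $U$ are permuted by a permutation matrix chosen as a function of the spectrum. Right-multiplication by a fixed permutation preserves Haar measure, so conditionally on the spectrum $U$ is still Haar. Hence independence holds under either labeling convention.

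For the second part, fix $\mathcal{A}$ and $A$. Then $O_{\mathcal{A}}$ is a measurable function of $U$ alone and $E_{\mathcal{A}}$ is a measurable function of the spectrum alone. Functions of independent variables are independent, which gives \eqref{eq:factorization}. The conditional law of $O_{\mathcal{A}}$ given $E_{\mathcal{A}}=E$ is taken as a regular conditional distribution, which is well defined since $E_{\mathcal{A}}$ has a density. It therefore equals the marginal law of $O_{\mathcal{A}}$ for almost every $E$. The conditional mean and variance are thus $E$-independent and equal the unconditional ones, and the derivative in $E$ vanishes.

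The value of the mean follows from \eqref{eq:avgcorr}:
\[
\mathbb{E}[O_{\mathcal{A}}]=\sum_{ij}A_{ij}\,\langle C^{\Psi}_{ji}\rangle=\frac{N_p}{N}\operatorname{Tr}A .
\]
Equivalently, each of the $N_p$ terms $(\psi^a)^{\mathsf T}A\psi^a$ has mean $\operatorname{Tr}A/N$ by \eqref{eq:psicov}. If desired, the variance can be written explicitly by contracting \eqref{eq:covcorr} with $A_{ji}A_{lk}$, but the theorem only needs its equality with the unconditional value.

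The main obstacle is the first step, namely a clean justification of the change of variables. This involves the Weyl-type Jacobian, the fact that the set of degenerate spectra has measure zero, and the labeling and sign ambiguities. I would invoke the standard derivation in Mehta's text for the Jacobian, and spend the actual effort on verifying that $O_{\mathcal{A}}$ is insensitive to the quotient choices, as described above.
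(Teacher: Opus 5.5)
Your proof is correct and follows essentially the same route as the paper: the Haar eigenvector matrix is independent of the spectrum, $O_{\mathcal{A}}$ is a function of the eigenvectors alone and $E_{\mathcal{A}}$ of the eigenvalues alone, so the conditional law equals the marginal, and the mean follows from \eqref{eq:avgcorr}. The only difference is that you justify the independence itself (Weyl Jacobian of $V=U\,\mathrm{diag}(\epsilon)\,U^{\mathsf{T}}$, plus the sign and labeling conventions), which the paper simply asserts as a known GOE property; the paper additionally writes out the explicit variance from \eqref{eq:covcorr}.
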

For a continuous spectrum, the conditional notation is understood as the limit of conditioning on an energy shell of vanishing width (equivalently, a regular conditional probability); all statements are unaffected by this choice. Note that the statements at the level of the joint law \eqref{eq:factorization} constitute exact energy-factorization, as they are exact independence statements of the ensemble. The factorization is exact at the ensemble level and does \emph{not} imply exact independence between energy and observable values within a single realization: the statements for a \emph{fixed realization} of the Hamiltonian, with $\mathcal{A}$ ranging over the eigenstates in an energy shell, constitute \emph{energy-shell conditioning}, because they carry the finite-population covariance subtraction \eqref{eq:eth_condvar_mc} below and are \emph{not} exact independence statements. This subtraction is the bridge from the exact ensemble independence to the physical energy-shell ETH, and it is of relative order $1/N$.

\begin{proof}[Proof of Theorem \ref{thm:factorization}]
For a one-body observable $A$, denote its matrix elements in the single-particle energy basis by
\begin{equation}
A_{ab} \equiv \langle a|A|b\rangle = \sum_{ij} A_{ij}\,\psi_i^a \psi_j^b,
\label{eq:Aab}
\end{equation}
so that the many-body diagonal and off-diagonal matrix elements are respectively
\begin{equation}
O_{\mathcal{A}} \equiv \langle\mathcal{A}|A|\mathcal{A}\rangle = \sum_{a\in\mathcal{A}} A_{aa},
\qquad
\langle\mathcal{A}|A|\mathcal{B}\rangle = A_{ab}\ \ \text{iff}\ \ \mathcal{B} = (\mathcal{A}\setminus\{a\})\cup\{b\},\label{5656}
\end{equation}
with $\mathcal{B}$ differing from $\mathcal{A}$ by exactly one particle-hole excitation $(a\to b)$, and otherwise $\langle\mathcal{A}|A|\mathcal{B}\rangle = 0$. The corresponding energy difference is $\omega = E_{\mathcal{A}} - E_{\mathcal{B}} = E_a - E_b$. The off-diagonal structure is thus exactly one-to-one with the single-particle matrix elements, and we emphasize that for a one-body observable the many-body off-diagonal matrix is \emph{highly sparse}, with only $N_p(N-N_p)$ nonzero entries out of $\binom{N}{N_p}^{2}$; all statements below about off-diagonal matrix elements refer to this \emph{allowed particle-hole sector}.

\paragraph{Ensemble level.}
For the GOE the eigenvector matrix and the spectrum are statistically independent random variables. Hence, at fixed $\mathcal{A}$, the observable
$O_{\mathcal{A}} = \sum_{a\in\mathcal{A}} A_{aa}$, which is a function of the eigenvectors only, is \emph{independent} of the energy
$E_{\mathcal{A}} = \sum_{a\in\mathcal{A}} E_a$, a function of the eigenvalues only. Consequently
\begin{equation}
\mathbb{E}\bigl[ O_{\mathcal{A}} \,\big|\, E_{\mathcal{A}} = E \bigr] = \frac{N_p}{N}\,\operatorname{Tr}A,
\label{eq:eth_condmean_ens}
\end{equation}
and the conditional variance equals the unconditional one,
\begin{equation}
\operatorname{Var}\bigl( O_{\mathcal{A}} \,\big|\, E_{\mathcal{A}} = E \bigr)
= \operatorname{Var}(O_{\mathcal{A}})
= \frac{2 N_p (N-N_p)}{(N-1)N(N+2)} \left[ \operatorname{Tr}(A^2) - \frac{(\operatorname{Tr}A)^2}{N} \right],
\label{eq:eth_condvar_ens}
\end{equation}
using \eqref{eq:covcorr}. In this ensemble, therefore, energy-shell conditioning does not alter the ETH fluctuations at all.

\paragraph{Fixed realization.}
For a \emph{fixed} Hamiltonian, we consider how $O_{\mathcal{A}}$ fluctuates as $\mathcal{A}$ ranges over the $N_p$-particle eigenstates in an energy window. Let $\{(A_{aa}, E_a)\}_{a=1}^{N}$ be a finite population of exactly $N$ pairs, and $(O_{\mathcal{A}}, E_{\mathcal{A}})$ the sums of $N_p$ of them, through \eqref{5656}. Let $\operatorname{Var}_{\mathcal A}$ and $\operatorname{Cov}_{\mathcal A}$ denote the variance and covariance over the $\binom{N}{N_p}$ subsets $\mathcal{A}$ of fixed size $N_p$, at a fixed realization. 
 The exact finite-population moments follow from the elementary  identities $\sum_a A_{aa} = \operatorname{Tr}A$, $\sum_a A_{aa}\epsilon_a = \operatorname{Tr}(AV)$ and $\sum_a \epsilon_a^2 = \operatorname{Tr}(V^2)$:
\begin{align}
\operatorname{Var}_{\mathcal{A}}[O] &= \frac{N_p(N-N_p)}{N-1}\left[ \frac{1}{N}\sum_a A_{aa}^2 - \Bigl(\frac{\operatorname{Tr}A}{N}\Bigr)^2 \right],
\label{eq:eth_varA} \\
\operatorname{Var}_{\mathcal{A}}[E] &= \eta^2\,\frac{N_p(N-N_p)}{N-1}\left[ \frac{\operatorname{Tr}(V^2)}{N} - \Bigl(\frac{\operatorname{Tr}V}{N}\Bigr)^2 \right],
\label{eq:eth_varE} \\
\operatorname{Cov}_{\mathcal{A}}[O,E] &= \eta\,\frac{N_p(N-N_p)}{N-1}\left[ \frac{\operatorname{Tr}(AV)}{N} - \frac{\operatorname{Tr}A\,\operatorname{Tr}V}{N^2} \right].
\label{eq:eth_cov}
\end{align}
In the central-limit regime (large $N_p$, $N-N_p$) the joint distribution of $(O_{\mathcal{A}}, E_{\mathcal{A}})$ is bivariate Gaussian with these moments, so that
\begin{equation}
\mathbb{E}[O\,|\,E] = \frac{N_p}{N}\operatorname{Tr}A + \frac{\operatorname{Cov}_{\mathcal{A}}[O,E]}{\operatorname{Var}_{\mathcal{A}}[E]}
\left( E - N_p\alpha - \eta\,\frac{N_p}{N}\operatorname{Tr}V \right),
\label{eq:eth_condmean_mc}
\end{equation}
and
\begin{equation}
\operatorname{Var}(O\,|\,E) = \operatorname{Var}_{\mathcal{A}}[O] - \frac{\operatorname{Cov}_{\mathcal{A}}[O,E]^2}{\operatorname{Var}_{\mathcal{A}}[E]}.
\label{eq:eth_condvar_mc}
\end{equation}
The covariance subtraction in \eqref{eq:eth_condvar_mc} is the quantitative imprint of the energy constraint. Its typical magnitude is easy to estimate: using the Haar-averaged value $\sum_a A_{aa}^2 = [2\operatorname{Tr}(A^2)+(\operatorname{Tr}A)^2]/(N+2)$ (which reproduces \eqref{eq:eth_condvar_ens} at leading order), the GOE moments $\langle \operatorname{Tr}(AV)^2\rangle = 2\operatorname{Tr}(A^2)$ and $\langle \operatorname{Tr}V^2\rangle = N^2 + N$, one finds
\begin{equation}
\frac{\operatorname{Cov}_{\mathcal{A}}[O,E]^2}{\operatorname{Var}_{\mathcal{A}}[E]\,\operatorname{Var}_{\mathcal{A}}[O]}
\;\sim\; \frac{1}{N},
\label{eq:eth_sub_ratio}
\end{equation}
so that the energy-shell conditioning reduces the ETH variance only by a \emph{relative} $O(1/N)$ amount:
\begin{equation}
\operatorname{Var}(O\,|\,E) = \operatorname{Var}_{\mathcal{A}}[O]\,\bigl[\,1 + O(1/N)\,\bigr],
\qquad
\mathbb{E}[O\,|\,E] = \frac{N_p}{N}\operatorname{Tr}A + O(1/N).
\end{equation}
\end{proof}
Thus, in this model the energy-shell-conditioned variance is, at leading order in $1/N$, \emph{energy-independent and entirely eigenvector-driven}; the energy constraint acts only through the subleading covariance subtraction, and even conditioning on the shell of a specific realization modifies the fluctuations by a parametrically small amount.

Next, we consider the off-diagonal structure.
The Haar statistics of the off-diagonal single-particle matrix elements follows from the exact fourth moment \eqref{eq:haar4}: for $a\neq b$,
\begin{equation}
\bigl\langle |A_{ab}|^2 \bigr\rangle = \frac{N\,\operatorname{Tr}(A^2) - (\operatorname{Tr}A)^2}{(N-1)N(N+2)}
= \frac{\operatorname{Tr}(A^2)}{N^2}\,\bigl[\,1+O(1/N)\,\bigr].
\label{eq:eth_offdiag}
\end{equation}
 Note the exact consistency relation 
 \begin{equation}\sum_{a\neq b}\langle|A_{ab}|^2\rangle = \operatorname{Tr}(A^2) - \sum_a\langle A_{aa}^2\rangle = [N\operatorname{Tr}(A^2)-(\operatorname{Tr}A)^2]/(N+2).\end{equation} 
 To avoid confusion with the smooth $O(1)$ function $f(\bar E,\omega)$ of the standard ETH ansatz $A_{mn} = A(\bar E)\delta_{mn} + e^{-S(\bar E)/2}f(\bar E,\omega)R_{mn}$ \cite{Srednicki1994}, we define $f_A(E,\omega)$ as the \emph{energy-resolved second moment} of the off-diagonal matrix elements. Since $|A_{ab}|^2$ is a function of the eigenvectors only and the eigenvectors are independent of the spectrum, this conditional average is, for every $E$ and $\omega$,
\begin{align}
f_A(E,\omega) \;&\equiv\; \bigl\langle |A_{ab}|^2 \;\big|\; E_a = E,\; E_b = E-\omega,\; a\neq b \bigr\rangle \notag \\
&= \frac{N\,\operatorname{Tr}(A^2) - (\operatorname{Tr}A)^2}{(N-1)N(N+2)}
= \frac{\operatorname{Tr}(A^2)}{N^2}\,\bigl[\,1+O(1/N)\,\bigr],
\label{eq:eth_spectral_function}
\end{align}
with no dependence on $E$, $\omega$, or the eigenvalue statistics.
 This is the off-diagonal counterpart of Theorem~\ref{thm:factorization}: within the allowed particle-hole sector, the ETH spectral function $f_A(E,\omega)$ is a constant---an exact statement about the sparse off-diagonal sector, not about a dense many-body ETH matrix---and the \emph{nontrivial} energy-dependent structure of $f_A(E,\omega)$ arises from eigenvector-eigenvalue correlations (Sec.~\ref{sec:deformation}). We emphasize that $f_A(E,\omega)$ is a \emph{single-particle}, energy-resolved second moment: for one-body observables the many-body off-diagonal matrix elements in the allowed particle-hole sector coincide exactly with the single-particle $A_{ab}$, so $f_A$ serves as an ETH-like diagnostic of this sector, and it is not to be identified with the dense-matrix spectral function $f(\bar E,\omega)$ of a generic many-body operator.

The results so far organize all relevant quantities into four distinct layers, each tied to one statistical ingredient of the ensemble:
\begin{enumerate}
    \item \emph{Eigenvector (Haar) statistics} determine the ETH diagonal variance \eqref{eq:eth_condvar_ens} and the off-diagonal amplitude \eqref{eq:eth_spectral_function}.
    \item \emph{One-point spectral density} determines the leading thermalization profile $g(t/\tau_\lambda)$; the subleading fixed-realization energy-shell drift \eqref{eq:eth_condmean_mc} is governed by the observable-energy covariance \eqref{eq:eth_cov} of the fixed realization.
    \item \emph{Eigenvector--eigenvalue correlations} determine the genuinely energy-dependent ETH structure, generated at first order in $\varepsilon$ by the controlled deformation of Sec.~\ref{sec:deformation}.
    \item \emph{Two-level spectral correlations} determine the $O(1/N)$ two-level-correlation contribution $\mathcal{F}(t/\tau_d)$ to the single-particle spectral form factor, and the corresponding subleading late-time correction to the dynamics.
\end{enumerate}
This is the precise form, in this model, of the statement that eigenbasis chaos and spectral chaos are independent.

\begin{remark}
\label{rem:global_RI}
Theorem~\ref{thm:factorization} has a clean interpretation as a statement
about random basis transformations. The Haar-distributed eigenvector matrix
$U=(\psi_i^a)$ of the one-body GOE is precisely a \emph{global} random
rotation of the eigenbasis, because the matrix elements $A_{ab}$ of the observable in
the energy basis have the same joint statistics as those of
$U^{\dagger}AU$ for Haar-$U\in O(N)$. This is the featureless
(single-block) limit of local rotational invariance~\cite{Vallini2026}, for
which the local free cumulants reduce to the normalized free cumulants of
$A$, $K_n=\kappa_n/N^{n-1}$, and carry no energy dependence. Concretely,
\eqref{eq:eth_offdiag} is the finite-$N$ (exact Haar) refinement of the
free-cumulant identity
\begin{equation}
\bigl\langle |A_{ab}|^2\bigr\rangle=\frac{\kappa_2}{N}\bigl[1+O(1/N)\bigr],
\qquad
\kappa_2=\frac{\operatorname{Tr}(A^2)}{N}-\Bigl(\frac{\operatorname{Tr}A}{N}\Bigr)^{2},
\qquad a\neq b,
\end{equation}
cf.\ the $K_2(E_i,E_j)=\operatorname{Tr}(\hat\sigma_{E_i}A\hat\sigma_{E_j}A)
-[\operatorname{Tr}(\hat\sigma_{E_i}A)]^2\sigma(E_i-E_j)$ in \cite{Vallini2026}. The energy
independence of \eqref{eq:eth_condmean_ens} and
\eqref{eq:eth_spectral_function} is thus not a peculiarity of the
free-fermion model but the defining property of the globally
rotation-invariant limit. Conversely, in this model we can prove it,
whereas in a generic chaotic system it is an assumption to be tested.
\end{remark}

\begin{remark}
\label{rem:empirical}
A central result of Ref.~\cite{Vallini2026} is that, under (local) rotational
invariance, ensemble averages over random rotations coincide with
\emph{empirical averages} of a single deterministic operator over small
energy windows. The
leading factorization of a product with repeated indices is accompanied by a
relative $O(1/D)$ correction fixed by the off-diagonal two-point function at
zero frequency,
\begin{equation}
\bigl\langle A_{ii}^2\bigr\rangle-\bigl\langle A_{ii}\bigr\rangle^2
\simeq \frac{2}{\beta_{\mathrm{RMT}}}\,\bigl\langle |A_{ij}|^2\bigr\rangle,
\qquad
\beta_{\mathrm{RMT}}=1\ (\text{orthogonal}),\ \ 2\ (\text{unitary}),
\end{equation}
see their Eqs.~(E10)/(F1). Because our ensemble is a GOE
($\beta_{\mathrm{RMT}}=1$) with \emph{exactly} known Haar moments, this can be
verified in closed form. From \eqref{eq:haar4} one finds
\begin{equation}
\bigl\langle A_{aa}^2\bigr\rangle
=\frac{2\operatorname{Tr}(A^2)+(\operatorname{Tr}A)^2}{N(N+2)},
\qquad
\langle A_{aa}\rangle=\frac{\operatorname{Tr}A}{N},
\end{equation}
so that, using \eqref{eq:eth_offdiag},
\begin{equation}
\operatorname{Var}(A_{aa})
=\frac{2}{N(N+2)}\Bigl[\operatorname{Tr}(A^2)-\frac{(\operatorname{Tr}A)^2}{N}\Bigr]
=2\bigl\langle |A_{ab}|^2\bigr\rangle\,\frac{N-1}{N},
\qquad a\neq b .
\label{eq:class_factor}
\end{equation}
The ratio $2(N-1)/N\to 2$ is precisely the orthogonal-class factor
$2/\beta_{\mathrm{RMT}}$, obtained here with the
exact $O(N)$ Weingarten function.
\end{remark}

\subsection{Extension to $k$-body observables}
\label{sec:kbody}
The analysis above can be extended to $k$-body observables. Consider a $k$-body operator diagonal in the position basis,
\begin{equation}
O^{(k)} = \sum_{i_1<\cdots<i_k} W_{i_1\cdots i_k}\, n_{i_1}\cdots n_{i_k},
\qquad
n_i = c_i^\dagger c_i,
\end{equation}
with a symmetric weight tensor $W$, and denote by 
\begin{equation}\hat{O}_A = \sum_{i_1<\cdots<i_k} W_{i_1\cdots i_k}|\psi_{i_1}^{a_1}|^2\cdots|\psi_{i_k}^{a_k}|^2\end{equation}
the kernel of a $k$-tuple $A=\{a_1,\dots,a_k\}$ of orbitals, so that $O^{(k)}_{\mathcal{A}} = \sum_{A\subset\mathcal{A},\,|A|=k}\hat{O}_A$.

 Two structural results hold for every fixed $k$:
\begin{enumerate}
\item \emph{Exact variance.} With $T^{(k)} = \sum_{|A|=k}\hat{O}_A\,I_A$ and $p_k = \binom{N_p}{k}/\binom{N}{k}$, the variance over subsets is the finite-population sum
\begin{equation}
\operatorname{Var}_{\mathcal{A}}\bigl[T^{(k)}\bigr]
= \sum_{j=0}^{k} \bigl( p_{2k-j} - p_k^{2} \bigr)\, S^{(j)},
\qquad
S^{(j)} = \sum_{|A\cap B|=j} \hat{O}_A \hat{O}_B,
\label{eq:k_gen}
\end{equation}
which follows from the elementary covariances $\operatorname{Cov}_{\mathcal{A}}(I_A,I_B) = p_{|A\cup B|}-p_{|A|}\,p_{|B|}$ of the subset indicators.

\item \emph{Energy covariance.} The covariance with the many-body energy carries the coefficient
\begin{equation}
p_{k+1} - f\,p_k \;=\; -\frac{k f^{k}(1-f)}{N}\,\bigl[\,1+O(1/N)\,\bigr],
\qquad \text{$k$ and $f$ fixed as $N\to\infty$},
\label{eq:k_pk}
\end{equation}
so that, since $\operatorname{Var}_{\mathcal{A}}[E] \sim \eta^{2} f(1-f) N^{2}$,
\begin{equation}
\frac{\operatorname{Cov}_{\mathcal{A}}\bigl[T^{(k)},E\bigr]^{2}}
{\operatorname{Var}_{\mathcal{A}}[E]\,\operatorname{Var}_{\mathcal{A}}\bigl[T^{(k)}\bigr]}
= O(1/N)
\qquad\text{for every fixed } k.
\label{eq:k_gen_sub}
\end{equation}
The energy-shell conditioning of the $k$-body ETH fluctuations is therefore always a subleading, $1/N$-suppressed effect.
\end{enumerate}

The two-body case provides the coefficient-level version of \eqref{eq:k_gen}. Consider $O^{(2)} = \sum_{i<j} W_{ij}\, n_i n_j$ with a symmetric weight extended by $W_{ji}=W_{ij}$ and $W_{ii}=0$, and define the symmetrized two-orbital kernel
\begin{equation}
\hat{O}_{ab} = \sum_{i,j} W_{ij}\,|\psi_i^a|^2|\psi_j^b|^2,
\qquad
T \;=\; \sum_{a,b\in\mathcal{A}} \hat{O}_{ab} = 2\,O^{(2)}_{\mathcal{A}},
\label{eq:k2_def}
\end{equation}
so that $\hat{O}_{ab} = \hat{O}_{ba}$ and $T$ equals twice the diagonal element of the observable. With the inclusion probabilities
\begin{equation}
p_k = \frac{\binom{N_p}{k}}{\binom{N}{k}},
\label{eq:pk}
\end{equation}
the exact one-realization sums
\begin{equation}
d_a = \hat{O}_{aa},\quad D = \sum_a d_a,\quad m_a = \sum_{b\neq a}\hat{O}_{ab},\quad M = \sum_{a,b}\hat{O}_{ab},
\label{eq:k2_dm}
\end{equation}
\begin{equation}
S_{\mathrm{dd}} = \sum_a d_a^{2},\quad S_{2} = \sum_{a<b}\hat{O}_{ab}^{2},\quad S_{\mathrm{dp}} = \sum_{a\neq b} d_a\hat{O}_{ab},
\label{eq:k2_S}
\end{equation}
\begin{equation}
S_{1} = \sum_{a<b<c} \bigl( \hat{O}_{ab}\hat{O}_{ac} + \hat{O}_{ab}\hat{O}_{bc} + \hat{O}_{ac}\hat{O}_{bc} \bigr),
\qquad
S_{\mathrm{dg}} = \sum_{a<b<c} \bigl( d_a\hat{O}_{bc} + d_b\hat{O}_{ac} + d_c\hat{O}_{ab} \bigr),
\label{eq:k2_S1}
\end{equation}
and $S_0$ fixed by the identity
\begin{equation}
M^{2} = S_{\mathrm{dd}} + 4S_{2} + \bigl(D^{2}-S_{\mathrm{dd}}\bigr) + 4S_{\mathrm{dp}} + 4S_{\mathrm{dg}} + 8S_{1} + 8S_{0},
\label{eq:k2_M2}
\end{equation}
the variance over subsets is, 
\begin{align}
\operatorname{Var}_{\mathcal{A}}\bigl[T\bigr]
&= (p_1-p_1^{2})S_{\mathrm{dd}} + 4(p_2-p_2^{2})S_{2} + (p_2-p_1^{2})\bigl(D^{2}-S_{\mathrm{dd}}\bigr) \notag \\
&\quad + 4(p_2-p_1p_2)S_{\mathrm{dp}} + 4(p_3-p_1p_2)S_{\mathrm{dg}} + 8(p_3-p_2^{2})S_{1} + 8(p_4-p_2^{2})S_{0},
\label{eq:k2_var_exact}
\end{align}
and the covariance with the many-body energy is, 
\begin{align}
\operatorname{Cov}_{\mathcal{A}}\bigl[T,\, E\bigr]
&= (p_1-p_1^{2})\sum_a \varepsilon_a d_a + (p_2-p_1^{2})\sum_a \varepsilon_a\bigl(D-d_a\bigr) \notag \\
&\quad + 2(p_2-p_1p_2)\sum_a \varepsilon_a m_a + (p_3-p_2p_1)\sum_a \varepsilon_a\bigl(M-D-2m_a\bigr),
\label{eq:k2_cov}
\end{align}
while $\operatorname{Var}_{\mathcal{A}}[E] = \eta^{2}\bigl[(p_1-p_2)\sum_a\varepsilon_a^{2} + (p_2-p_1^{2})\bigl(\sum_a\varepsilon_a\bigr)^{2}\bigr]$.
In the bivariate-Gaussian regime (valid as $N_p, N-N_p\to\infty$ at fixed $k$ for regular weights) the conditional variance is therefore
\begin{equation}
\operatorname{Var}\bigl(T\,\big|\,E\bigr)
= \operatorname{Var}_{\mathcal{A}}\bigl[T\bigr] - \frac{\operatorname{Cov}_{\mathcal{A}}\bigl[T,E\bigr]^{2}}{\operatorname{Var}_{\mathcal{A}}\bigl[E\bigr]},
\label{eq:k2_condvar_exact}
\end{equation}
the two-body analogue of \eqref{eq:eth_condvar_mc} at full coefficient level. 

The Haar content of the coefficient sums follows from the exact same-column moments $\langle|\psi_i^a|^2|\psi_k^a|^2\rangle = (1+2\delta_{ik})/[N(N+2)]$ and, to leading order in $1/N$, the cross-column moment $\langle|\psi_i^a|^2|\psi_j^b|^2\rangle = 1/[N(N+2)]$ for $a\neq b$. Defining $\overline{W} = \sum_{i,j}W_{ij}$, $R_1 = \sum_{i<j<l}(W_{ij}W_{il}+W_{ij}W_{jl}+W_{il}W_{jl})$ and $R_2 = \sum_{i<j}W_{ij}^{2}$ over the symmetrized weight, one finds, to leading order in $1/N$,
\begin{equation}
\bigl\langle \hat{O}_{ab}\bigr\rangle = \frac{\overline{W}}{N(N+2)},
\qquad
\bigl\langle \hat{O}_{ab}^{2}\bigr\rangle = \frac{\overline{W}^{2} + 2R_1 + 4R_2}{N^{2}(N+2)^{2}},
\qquad
\bigl\langle \hat{O}_{ab}\hat{O}_{ac}\bigr\rangle = \frac{\overline{W}^{2} + R_1}{N^{2}(N+2)^{2}},
\label{eq:k2_haar}
\end{equation}
together with 
\begin{align}
\langle\hat{O}_{aa}^{2}\rangle &= \frac{\overline{W}^{2}}{N(N+2)(N+4)(N+6)}\bigl[1+O(1/N)\bigr],\nonumber\\
\langle\hat{O}_{aa}\hat{O}_{ab}\rangle &= \frac{\overline{W}^{2}+2R_1}{N^{2}(N+2)(N+4)}\bigl[1+O(1/N)\bigr]
\end{align}
 for the diagonal pieces.

The coefficient-level energy imprint is read off from the  asymptotics of the $p$-combinations: $p_2-p_1^{2} = -f(1-f)/(N-1)$ and $p_3-fp_2 = -2f^{2}(1-f)/N\,\bigl[1+O(1/N)\bigr]$. Consequently
\begin{equation}
\operatorname{Cov}_{\mathcal{A}}\bigl[T,E\bigr]
= f(1-f)\sum_a\varepsilon_a d_a + 2f^{2}(1-f)\sum_a\varepsilon_a m_a + O(1/N),
\label{eq:k2_cov_lead}
\end{equation}
and, since $\operatorname{Var}_{\mathcal{A}}[E]\sim \eta^{2}f(1-f)N^{2}$, the subtraction term in \eqref{eq:k2_condvar_exact} is of relative order $1/N$ with respect to $\operatorname{Var}_{\mathcal{A}}[T]$. 

Figure~\ref{fig:kbody} (left panel) shows the measured fluctuations of the one-body two-site density $n_1+n_2$, the two-body observable $\sum_i n_i n_{i+1}$ and the three-body observable $\sum_i n_i n_{i+1}n_{i+2}$. For this specific class of translation-invariant observables with the normalization used here, the squared coefficient of variation exhibits the scalings $N^{-1}$ ($k=1$) and $N^{-3}$ ($k=2,3$) in the studied range $32\le N\le 256$. We do not claim these exponents as a universal law. The relative variance of a $k$-body observable depends on the number of terms in $O^{(k)}$, the scaling of the individual kernels, and the correlations between overlapping terms, and the exact value in each case is given by the finite-population formula \eqref{eq:k_gen} together with the relevant Haar moments. What is model-independent within this framework, for every fixed $k$ in the central-limit regime, is the $1/N$ suppression of the energy-shell subtraction \eqref{eq:k_gen_sub}. The figure thus serves as a numerical verification of the general finite-population framework, rather than as the proposal of new universal scaling exponents.
\begin{figure}[t]
\centering
\includegraphics[width=0.92\textwidth]{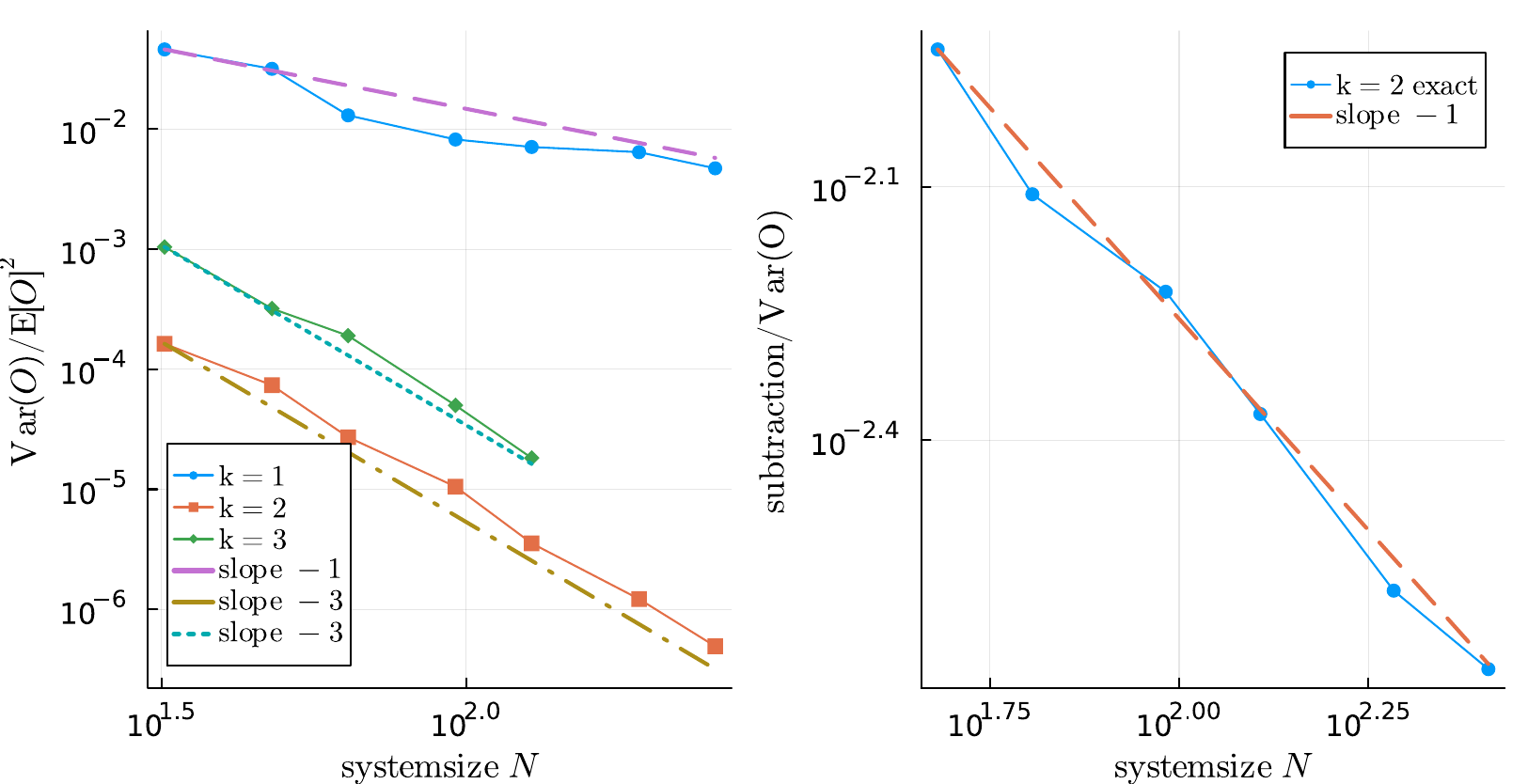}
\caption{Left: squared coefficient of variation $\operatorname{Var}(O^{(k)})/\mathbb{E}[O^{(k)}]^{2}$ versus $N$ (log-log, $N_p=N/2$) for the one-body two-site density (circles), the two-body nearest-neighbour observable $\sum_i n_i n_{i+1}$ (squares) and the three-body observable $\sum_i n_i n_{i+1}n_{i+2}$ (diamonds); the dashed and dotted lines have slopes $-1$ and $-3$. The measured scalings ($N^{-1}$ for $k=1$ and $N^{-3}$ for $k=2,3$) are specific to this class of normalized translation-invariant observables; the model-independent parametric statement is the exact finite-population formula \eqref{eq:k_gen} and the $1/N$ suppression of the energy-shell subtraction. Right: the average covariance subtraction $\langle\operatorname{Cov}_{\mathcal{A}}[T,E]^{2}/\operatorname{Var}_{\mathcal{A}}[E]\rangle/\langle\operatorname{Var}_{\mathcal{A}}[T]\rangle$ of the two-body observable, computed from the exact formulas \eqref{eq:k2_var_exact}--\eqref{eq:k2_cov} and averaged over GOE realizations, together with the slope $-1$ reference: the subtraction is $\approx 0.56/N$, the same $1/N$ suppression as for one-body observables.}
\label{fig:kbody}
\end{figure}

\subsection{Generating energy-dependent ETH by controlled correlations}
\label{sec:deformation}
Theorem~\ref{thm:factorization} and \eqref{eq:eth_spectral_function} establish that eigenvector isotropy alone generates no energy dependence of the ETH statistics.  The natural converse question---where the nontrivial energy dependence of $O_{\mathrm{th}}(E)$ and of $f_A(E,\omega)$ actually comes from---can be answered constructively within the model by switching on a controlled \emph{eigenvector-eigenvalue correlation}. 

Let us consider the deformed one-body Hamiltonian
\begin{equation}
H(\varepsilon) = H_0 + \varepsilon H_1, \qquad H_0 = \eta V \in \mathrm{GOE}, \qquad H_1 = H_1^{\mathsf{T}}\ \text{fixed},
\label{eq:deform_H}
\end{equation}
with $A$ and $H_1$ real symmetric.  At $\varepsilon=0$ the model factorizes (Theorem~\ref{thm:factorization}); for $\varepsilon>0$ the eigenvectors acquire an explicit dependence on the spectrum.  Throughout this subsection, perturbation theory is used only as an ensemble-level asymptotic expansion in $\varepsilon$: all analytical statements are first order in $\varepsilon$, while all finite-$\varepsilon$ numerical data are obtained from exact diagonalization. The perturbative statements concern the ensemble-averaged observables evaluated below; they do not assert uniform convergence of the eigenvector expansion over all realizations, since the GOE minimum level spacing is much smaller than the typical spacing. For sufficiently small $\varepsilon$, the first-order (Rayleigh-Schr\"odinger) perturbation theory gives
\begin{equation}
E_a(\varepsilon) = E_a + \varepsilon\,(H_1)_{aa} + O(\varepsilon^2), \qquad
|a(\varepsilon)\rangle = |a\rangle + \varepsilon \sum_{c\neq a} \frac{(H_1)_{ca}}{E_a - E_c}\,|c\rangle + O(\varepsilon^2),
\label{eq:deform_pt}
\end{equation}
where $(H_1)_{ca} = \langle c|H_1|a\rangle$ in the unperturbed basis, and hence
\begin{equation}
A_{aa}(\varepsilon) = A_{aa} + 2\varepsilon \sum_{c\neq a} \frac{A_{ac}\,(H_1)_{ac}}{E_a - E_c} + O(\varepsilon^2),
\label{eq:deform_dA_diag}
\end{equation}
\begin{equation}
A_{ab}(\varepsilon) = A_{ab} + \varepsilon\Bigl[ \sum_{c\neq a} \frac{A_{cb}\,(H_1)_{ca}}{E_a - E_c} + \sum_{c\neq b} \frac{A_{ac}\,(H_1)_{bc}}{E_b - E_c} \Bigr] + O(\varepsilon^2).
\label{eq:deform_dA_off}
\end{equation}
The relevant Haar ingredients follow from the exact fourth moment \eqref{eq:haar4} [for the first two of \eqref{eq:deform_haar2}] and from the leading Wick contraction of three distinct columns 
\begin{align}
\langle A_{ac}\,(H_1)_{ac}\rangle &= \frac{N\operatorname{Tr}(AH_1)-\operatorname{Tr}A\,\operatorname{Tr}H_1}{(N-1)N(N+2)} \qquad (a\neq c), \notag \\
\langle A_{aa}\,(H_1)_{aa}\rangle &= \frac{2\operatorname{Tr}(AH_1)+\operatorname{Tr}A\,\operatorname{Tr}H_1}{N(N+2)},
\label{eq:deform_haar2}
\end{align}
\begin{equation}
\langle A_{ab}\,A_{cb}\,(H_1)_{ca}\rangle = \frac{\operatorname{Tr}(A^2 H_1)}{N^3}\,\bigl[1+O(1/N)\bigr] \qquad (a,b,c\ \text{distinct}).
\label{eq:deform_haar3}
\end{equation}

Conditioning \eqref{eq:deform_dA_diag} on $E_a(\varepsilon)=E$ and using \eqref{eq:deform_haar2} gives
\begin{equation}
\mathbb{E}\bigl[ A_{aa}(\varepsilon) \,\big|\, E_a(\varepsilon) = E \bigr]
= \frac{\operatorname{Tr}A}{N} + 2\varepsilon\,\langle A_{ac}(H_1)_{ac}\rangle \sum_{c\neq a} \frac{1}{E-E_c} + O(\varepsilon^2).
\label{eq:deform_condmean}
\end{equation}
For $E$ in the spectral bulk (away from the semicircle edges) the principal-value sum is evaluated at the ensemble level. Because the eigenvectors are statistically independent of the spectrum, the spectral average factorizes from the eigenvector average, and the average of the sum over $c\neq a$ with $E_a$ held at $E$ is
 \begin{equation}\mathbb{E}\Bigl[ \sum_{c\neq a}\frac{1}{E-E_c}\;\Big|\; E_a=E \Bigr] = \mathrm{PV}\!\int dE'\frac{\rho(E')}{E-E'}\,\bigl[1+O(1/N)\bigr] = \frac{E}{2\eta^2}\,\bigl[1+O(1/N)\bigr],\end{equation} 
which is the Hilbert transform of the semicircle \eqref{eq:rhoE}. This replacement is an ensemble-level asymptotic relation in the spectral bulk; it makes no statement about the realization-to-realization fluctuations of the denominator sum, which are irrelevant for the ensemble averages computed here, and the numerical verification below covers the full band including the edge regions.  The energy-dependent \emph{single-particle energy-resolved} slope therefore emerges linearly in $\varepsilon$, at leading order in $1/N$:
\begin{equation}
\frac{\partial}{\partial E}\,\mathbb{E}\bigl[ A_{aa}(\varepsilon) \,\big|\, E_a(\varepsilon)=E \bigr]
= \frac{\varepsilon\,\operatorname{Tr}(AH_1)}{\eta^2 N^2}\,\bigl[1+O(1/N)\bigr] + O(\varepsilon^2).
\label{eq:deform_slope}
\end{equation}

\paragraph{Eigenvector--eigenvalue covariance.}
To first order in $\varepsilon$ the correlation generated by the deformation has the exact coefficient
\begin{equation}
\operatorname{Cov}\bigl( A_{aa}(\varepsilon),\, E_a(\varepsilon) \bigr)
= \varepsilon\Bigl[ \frac{\operatorname{Tr}(AH_1)}{N} - \frac{\operatorname{Tr}A\,\operatorname{Tr}H_1}{N^2} \Bigr] + O(\varepsilon^2).
\label{eq:deform_cov}
\end{equation}
Indeed $\operatorname{Cov}(A_{aa},\delta E_a) = \langle A_{aa}(H_1)_{aa}\rangle - \operatorname{Tr}A\,\operatorname{Tr}H_1/N^2$, and
\begin{equation}
\operatorname{Cov}(\delta A_{aa},E_a) = 2\sum_{c\neq a}\langle A_{ac}(H_1)_{ac}\rangle\,\Bigl\langle \frac{E_a}{E_a-E_c}\Bigr\rangle = (N-1)\,\langle A_{ac}(H_1)_{ac}\rangle,
\end{equation}
where $\langle E_a/(E_a-E_c)\rangle = 1/2$  by level-exchange symmetry.  The two pieces combine into \eqref{eq:deform_cov}.  
The deformed eigenvector matrix $U(\varepsilon)$ and the deformed spectrum $\{E_a(\varepsilon)\}$ are therefore no longer statistically independent. 

 At the many-body level the cross-covariances $\operatorname{Cov}(A_{aa}(\varepsilon), E_b(\varepsilon)) = -\varepsilon\langle A_{ac}(H_1)_{ac}\rangle + O(\varepsilon^2)$ ($b\neq a$) give
\begin{equation}
\operatorname{Cov}\bigl( O_{\mathcal A}(\varepsilon),\, E_{\mathcal A}(\varepsilon) \bigr)
= \varepsilon\, f(1-f)\,\operatorname{Tr}(AH_1)\,\bigl[1+O(1/N)\bigr] + O(\varepsilon^2),
\end{equation}
and, with $\langle\operatorname{Var}_{\mathcal A}[E]\rangle = \eta^2 N_p(N-N_p+2)\,\bigl[1+O(1/N)\bigr]$ on average over $V$, the many-body microcanonical slope
\begin{equation}
\frac{\partial}{\partial E_{\mathcal A}}\,\mathbb{E}\bigl[ O_{\mathcal A}(\varepsilon) \,\big|\, E_{\mathcal A}=E \bigr]
= \frac{\varepsilon\,\operatorname{Tr}(AH_1)}{\eta^2 N^2}\,\bigl[1+O(1/N)\bigr] + O(\varepsilon^2).
\end{equation}
We see that the negative cross-covariances cancel the $N_p$-growth of the diagonal ones, so the many-body microcanonical slope has the same coefficient as the single-orbital slope \eqref{eq:deform_slope}.

\paragraph{Off-diagonal ETH spectral function.}
From \eqref{eq:deform_dA_off} and \eqref{eq:deform_haar3}, we have
\begin{align}
\bigl\langle |A_{ab}(\varepsilon)|^2 - |A_{ab}|^2 \bigr\rangle
&= 2\varepsilon\,\langle A_{ab}\,A_{cb}\,(H_1)_{ca}\rangle \Bigl[ \sum_{c\neq a}\frac{1}{E_a-E_c} + \sum_{c\neq b}\frac{1}{E_b-E_c} \Bigr] + O(\varepsilon^2) \notag \\
&= \frac{\varepsilon\,\operatorname{Tr}(A^2H_1)\,(E_a+E_b)}{\eta^2 N^3}\,\bigl[1+O(1/N)\bigr] + O(\varepsilon^2),
\end{align}
so that, with $E_b = E-\omega$, the ETH spectral function becomes
\begin{equation}
f_A(E,\omega) = \frac{N\operatorname{Tr}(A^2)-(\operatorname{Tr}A)^2}{(N-1)N(N+2)}
+ \frac{\varepsilon\,\operatorname{Tr}(A^2H_1)\,(2E-\omega)}{\eta^2 N^3}\,\bigl[1+O(1/N)\bigr] + O(\varepsilon^2).
\label{eq:deform_fA}
\end{equation}
So, the constant of \eqref{eq:eth_spectral_function} acquires the simplest nontrivial energy dependence, linear in $E$ and $\omega$ and proportional to $\varepsilon\operatorname{Tr}(A^2H_1)$.

This provides a \emph{controlled converse mechanism} to Theorem~\ref{thm:factorization}, as we do not claim that eigenvector-eigenvalue correlations are the only possible source of energy dependence.  
The random free fermion model is the $\varepsilon=0$ baseline; the deformation \eqref{eq:deform_H} generates, through the energy denominators $(E_a-E_c)^{-1}$,  the eigenvector-eigenvalue correlation \eqref{eq:deform_cov}, and simultaneously the diagonal energy dependence \eqref{eq:deform_slope} and the off-diagonal structure \eqref{eq:deform_fA}.  
The three quantities carry a clean hierarchy of precisions: the covariance \eqref{eq:deform_cov} is exact to first order in $\varepsilon$ with an exact coefficient; the slope \eqref{eq:deform_slope} and the spectral function \eqref{eq:deform_fA} are first order in $\varepsilon$ at leading order in $1/N$, their $[1+O(1/N)]$ factors originating respectively from the ensemble-level replacement of the energy-denominator sum by its Hilbert transform and from the leading Wick contraction \eqref{eq:deform_haar3} of three distinct columns.  

\begin{remark}
\label{rem:deform_Uloc}
The perturbation \eqref{eq:deform_H} provides a perturbative microscopic
mechanism that is qualitatively consistent with the physical picture
behind local rotational invariance~\cite{Vallini2026}. The first-order
perturbation theory \eqref{eq:deform_pt} admixes $|a\rangle$ with $|c\rangle$
through the amplitude $\varepsilon(H_1)_{ac}/(E_a-E_c)$: eigenstates close in
energy, $|E_a-E_c|\lesssim\varepsilon|(H_1)_{ac}|$, are reshuffled most
strongly, so the deformation acts with an effective energy window set by
$\varepsilon$, while distant states are only weakly affected at this order.
The energy denominators $(E_a-E_c)^{-1}$ are thus the
microscopic origin of the smooth energy dependence of the ETH functions.
Concretely, comparing \eqref{eq:deform_condmean} with
$K_1(E)=\operatorname{Tr}(\hat\sigma_E A)$ and \eqref{eq:deform_fA} with
$K_2(E_i,E_j)=\operatorname{Tr}(\hat\sigma_{E_i}A\hat\sigma_{E_j}A)-
[\operatorname{Tr}(\hat\sigma_{E_i}A)]^2\sigma(E_i-E_j)$, the $\varepsilon=0$
model is the single-block (global) limit with constant $K_1,K_2$, whereas for
$\varepsilon>0$ both acquire an explicit energy dependence:
\begin{equation}
\delta K_1(E)=\frac{\varepsilon\operatorname{Tr}(AH_1)}{\eta^2N^2}\,E\bigl[1+O(1/N)\bigr],
\end{equation}
\begin{equation}
\delta K_2(E,\omega)=\frac{\varepsilon\operatorname{Tr}(A^2H_1)(2E-\omega)}{\eta^2N^3}\bigl[1+O(1/N)\bigr],
\end{equation}
at first order in $\varepsilon$ and to leading order in $1/N$. The deformation
therefore realizes, in a solvable model, a perturbative mechanism for the
transition from global to local rotational invariance; it does not construct a
strict local-Haar ensemble with a sharp energy window.
\end{remark}

We verified \eqref{eq:deform_cov}--\eqref{eq:deform_fA} numerically at $N=100$, $\eta=1$, over $10^{4}$ GOE realizations, with $A$ the two-site density $A=\frac12(|1\rangle\langle1|+|2\rangle\langle2|)$ and $H_1 = 2\,(|1\rangle\langle1|+|2\rangle\langle2|)$, for which $\operatorname{Tr}(AH_1)=2$ and $\operatorname{Tr}(A^2H_1)=1$.  For this rank-two $H_1$ the typical eigenbasis matrix elements of $H_1$ are of order $1/N$, so the typical eigenvector-mixing parameter is $\varepsilon|(H_1)_{ac}|/|E_a-E_c| = O(\varepsilon/\sqrt N)\approx 3\times10^{-2}$ at $N=100$, $\varepsilon=0.3$; accordingly we use \eqref{eq:deform_pt}--\eqref{eq:deform_dA_off} as an ensemble-level asymptotic expansion in $\varepsilon$, while all finite-$\varepsilon$ data reported below are obtained by exact diagonalization.  
Fig.~\ref{fig:deformation}, left, shows the corresponding flat-versus-linear binned conditional means: the binned $\mathbb{E}[A_{aa}(\varepsilon)|E]-\operatorname{Tr}A/N$ is flat at $\varepsilon=0$ and linear at $\varepsilon=0.3$, with a fitted slope $0.99$ times the first-order prediction $\varepsilon\operatorname{Tr}(AH_1)E/(\eta^2N^2)$; the observed ratio is consistent with the exact counting factor $(N-1)/N$ of the Hilbert-transform replacement (the sum in \eqref{eq:deform_slope} excludes exactly one level), whose remaining finite-$N$ corrections are $O(1/N)$ and below the statistical resolution at $N=100$. The binned off-diagonal change $\bigl[|A_{ab}(\varepsilon)|^2-|A_{ab}|^2\bigr]/\varepsilon$ follows the linear prediction $\operatorname{Tr}(A^2H_1)(E_a+E_b)/(\eta^2N^3)$ with a fitted slope $0.90$ times the predicted one, see Fig.~\ref{fig:deformation}, middle; an $N$-scan at $\varepsilon=0.3$ ($N=50,100,200$) gives the ratios $0.80,0.90,0.94$, i.e.\ a deficit decreasing roughly as $10/N$, consistent with an $O(1/N)$ finite-size correction to the leading three-column Haar contraction \eqref{eq:deform_haar3}. To test the first-order parameter dependence, Fig.~\ref{fig:deformation}, right, shows the ratio of the fitted slope to the first-order prediction for $\varepsilon\in\{0.1,0.2,0.3\}$ in both channels: the ratios are constant in $\varepsilon$ within the statistical errors (diagonal $0.99$, off-diagonal $0.90$), as required by a first-order-in-$\varepsilon$ structure with $\varepsilon$-independent $1/N$ corrections. The $O(\varepsilon^2)$ corrections appear instead in the curvature of the binned data: the center-bin residuals of the exact data against the first-order theory scale approximately as $1:4:9$ across $\varepsilon=0.1,0.2,0.3$, consistent with a quadratic finite-$\varepsilon$ correction. Each panel averages $10^{4}$ realizations at $N=100$ (about $4\times10^{4}$ diagonal and $2\times10^{6}$ off-diagonal samples per bin); the error bars in the right panel are the standard errors of the weighted slope fits, far below the displayed features.

\begin{figure}[t]
\centering
\includegraphics[width=0.98\textwidth]{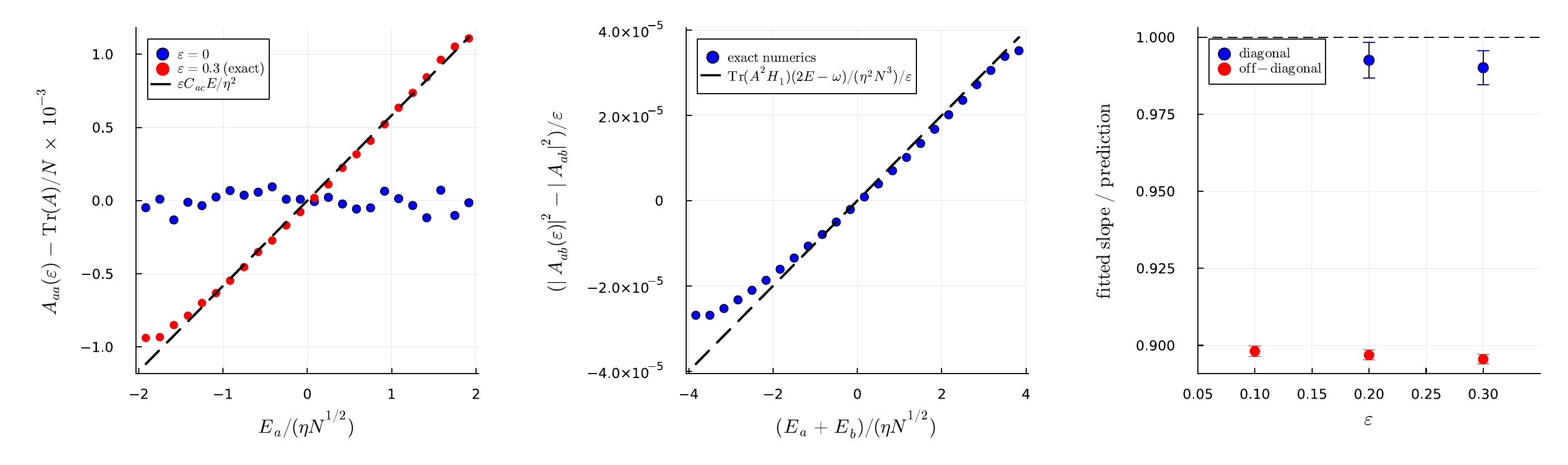}
\caption{Emergence of energy-dependent ETH structure under the eigenvector--eigenvalue correlation deformation $H(\varepsilon)=H_0+\varepsilon H_1$ ($N=100$, $\eta=1$, $10^{4}$ GOE realizations, $A$ a two-site density, $H_1=2(|1\rangle\langle1|+|2\rangle\langle2|)$). Left: binned $\mathbb{E}[A_{aa}(\varepsilon)|E]-\operatorname{Tr}A/N$ versus $E/(\eta\sqrt{N})$ for $\varepsilon=0$ (circles, flat) and $\varepsilon=0.3$ (squares, linear), together with the first-order prediction $\varepsilon\operatorname{Tr}(AH_1)E/(\eta^2N^2)$ (dashed). Middle: binned $\bigl[|A_{ab}(\varepsilon)|^2-|A_{ab}|^2\bigr]/\varepsilon$ versus $(E_a+E_b)/(\eta\sqrt{N})$ at $\varepsilon=0.3$, together with the first-order prediction $\operatorname{Tr}(A^2H_1)(E_a+E_b)/(\eta^2N^3)$ (dashed). Right: ratio of the fitted slope to the first-order prediction versus $\varepsilon$ for the diagonal channel (circles) and the off-diagonal channel (squares), with standard errors of the weighted fits; the horizontal line is the first-order prediction. The ratios are $\varepsilon$-independent (diagonal $\approx0.99$, consistent with the counting factor $(N-1)/N$ of the Hilbert-transform replacement; off-diagonal $\approx0.90$, consistent with an $O(1/N)$ correction to the three-column Haar contraction), confirming the linear-in-$\varepsilon$ generation of energy dependence; the $\varepsilon^2$ corrections appear in the curvature, with center-bin residuals scaling $1:4:9$ across $\varepsilon=0.1,0.2,0.3$.}
\label{fig:deformation}
\end{figure}

\section{Dynamics of correlation functions and spectral form factors}
\label{sec:dynamics}
This section provides a dynamical probe of the statistical decomposition established above: it re-projects the statistical mechanisms of Secs.~\ref{sec:thermalization} and \ref{sec:conservation} onto time-dependent observables. We first analyze the fluctuations of two-point correlation functions, which combine the Haar eigenvector statistics (layer (i)) with the relaxation envelope (layer (ii)), and then the single-particle spectral form factor, which isolates the two-level spectral correlations (layer (iv)). The entanglement entropy of random free fermions has been studied in the literature \cite{Magan2016,Burmistrova2017,PPGM23,Pastur2024}, so we do not repeat those computations here.

\subsection{Fluctuations of two-point functions}
\label{sec:twopoint_fluct}

For a single‑particle observable $A$, the expectation value in a time‑evolved state $|\Psi(t)\rangle$ is
\begin{equation}
\langle A(t) \rangle = \sum_{i,j} A_{ij} \, C_{ij}(t), \qquad C_{ij}(t) = \langle \Psi(t)| c_i^\dagger c_j |\Psi(t)\rangle.
\end{equation}
The correlation matrix $C(t)$ evolves as
\begin{equation}
C(t) = e^{-iht} \, C(0) \, e^{iht},
\label{eq:Cevol}
\end{equation}
where $h$ is the single‑particle Hamiltonian with matrix elements $h_{ij} = \alpha\delta_{ij} + \eta V_{ij}$. The initial state $|\Psi(0)\rangle$ is chosen to have a fixed particle number $N_p$ and a narrow energy distribution. For concreteness, we take $C(0)$ to be a rank-$N_p$ projection onto $N_p$ occupied orbitals drawn \emph{independently of} $V$ (equivalently, a Haar-random projection of rank $N_p$): the ensemble average is then taken over both the Hamiltonian and the independent choice of the occupied orbitals, so that $\langle C_{kl}(0)\rangle = f\,\delta_{kl}$ with $f = N_p/N$, and $C(0)$ is statistically independent of the time evolution.

To compute the fluctuations of $\langle A(t) \rangle$, we need the covariance of $C_{ij}(t)$. Using \eqref{eq:Cevol}, we have
\begin{equation}
C_{ij}(t) = \sum_{k,l} \bigl( e^{-iht} \bigr)_{ik} C_{kl}(0) \bigl( e^{iht} \bigr)_{lj}.
\end{equation}
Averaging over the random matrix $h$ (i.e., over $V$) involves both the eigenvector and eigenvalue statistics. Since $e^{-iht}$ is just $U(t)$ in the single‑particle basis, its moments have been computed in Sec.~\ref{3.1}. Using the second moment \eqref{eq:U2psi} together with the unitarity relation $\sum_k U_{ik}(t) U_{kj}^*(t) = \delta_{ij}$, one finds
\begin{equation}
\langle C_{ij}(t) \rangle = \sum_{k,l} \langle U_{ik}(t) U_{lj}^*(t) \rangle\, f\,\delta_{kl} = f\, \delta_{ij},
\label{eq:Cavg_exact}
\end{equation}
independently of $t$: the time dependence cancels \emph{identically} between the disconnected part of the second moment and its connected part, since the coefficients in \eqref{eq:U2psi} satisfy $c_1(t) + (N+1)c_2(t) = 1$. For a single many-body eigenstate this is the elementary statement that $\langle A(t)\rangle = \langle\Psi^{N_p}|A|\Psi^{N_p}\rangle$ is exactly stationary. This distinction is important: the stationarity of an individual many-body eigenstate is not in conflict with the relaxation found in Sec.~\ref{sec:thermalization}, because the latter concerns generic (non-eigenstate) initial states and their ensemble-dependent overlaps with the eigenbasis. Thus $\langle C(t)\rangle$ itself carries no thermalization dynamics; the relaxation of Sec.~\ref{sec:thermalization} is visible in the model only through the initial-state dependence and the fluctuations studied next. Any approximation that replaced the connected part of the second moment by its Gaussian value $\frac{1}{N}(\delta_{ik}\delta_{jl}+\delta_{il}\delta_{jk})$ would violate the exact trace relation $\operatorname{Tr}C(t)=N_p$, since $\operatorname{Tr}C(t)=N_p$ identically.

The covariance of $C_{ij}(t)$ can be obtained by similar techniques. A particularly important quantity is the variance of $\langle A(t) \rangle$ itself:
\begin{equation}
\operatorname{Var}\!\bigl( \langle A(t) \rangle \bigr) = \sum_{ijkl} A_{ij} A_{kl} \, \operatorname{Cov}\!\bigl( C_{ij}(t), C_{kl}(t) \bigr).
\end{equation}
At $t=0$ the result follows directly from \eqref{eq:covcorr}:
\begin{equation}
\operatorname{Var}\!\bigl( \langle \Psi^{N_p}| A |\Psi^{N_p}\rangle \bigr)
= \frac{2 N_p (N-N_p)}{(N-1)N(N+2)}
\left[ \operatorname{Tr}(A^2) - \frac{(\operatorname{Tr}A)^2}{N} \right].
\label{eq:var_static}
\end{equation}
 At finite times the thermalization factor $|g(t/\tau_\lambda)|^2$ dresses this expression, and one finds at leading order
\begin{equation}
\operatorname{Var}\!\bigl( \langle A(t) \rangle \bigr) = \frac{2 f(1-f)}{N+2} \left[ \operatorname{Tr}(A^2) - \frac{(\operatorname{Tr}A)^2}{N} \right] |g(t/\tau_\lambda)|^2 + O(1/N^2),
\label{eq:var_A}
\end{equation}
generalizing Mag\'an's static result \eqref{eq:covcorr} to finite times. The factor $|g(t/\tau_\lambda)|^2$ shows that the variance of the fluctuations decays together with the expectation value itself. For large $t$, both vanish as $t^{-3}$. We emphasize that \eqref{eq:var_A} is a leading-order statement: at $t=0$ it reproduces the exact static variance \eqref{eq:var_static} up to the factor $N/(N-1)=1+O(1/N)$, which is absorbed in the $O(1/N^2)$ remainder, and the remainder also absorbs the spectral-correlation contributions (the analogue of $\operatorname{Var}(|f(t)|^2)$), which are $O(1/N^2)$ at fixed scaled time and have been discussed in Sec.~\ref{sec:variance}.

In contrast, for a full GOE Hamiltonian without particle number conservation, the analogous variance would be $\frac{1}{N}\operatorname{Tr}(A^2) |g(t/\tau_\lambda)|^2$, independent of any filling fraction. Thus, the free fermion fluctuations are suppressed by an additional factor $f(1-f)$, which reaches its maximum $1/4$ at half‑filling. This reflects the reduced randomness due to the constraint of fixed particle number. 

\subsection{Single-particle spectral form factor}
\label{sec:SFF_comparison}

The separation identified above can be tested dynamically by changing the fine-grained spectral correlations while keeping the eigenvector statistics and one-point density fixed. In our free fermion model, the natural object that carries information about single-particle energy correlations is
\begin{equation}
S_1(t) \equiv |f(t)|^2 = \frac{1}{N^2} \sum_{\alpha,\beta} e^{-i(E_\alpha-E_\beta)t},
\end{equation}
where $f(t) = \frac{1}{N}\sum_\alpha e^{-iE_\alpha t}$ as defined in Eq.~\eqref{eq:U1}.  We will refer to $S_1(t)$ as the \textit{single-particle spectral form factor}.  Its ensemble average $\langle S_1(t)\rangle$ controls the leading thermalization dynamics through Eq.~\eqref{eq:main_avg}, and its $O(1/N)$ corrections encode the presence or absence of spectral chaos.

In the Poissonian ensemble (cf. Appendix \ref{PE}), the eigenvalues $E_\alpha$ are independent random variables drawn from the average density $\rho(E)$ of Eq.~\eqref{eq:rhoE}---the same semicircular one-point density as in the GOE ensemble, so that the comparison below isolates the effect of the fine-grained correlations.  For independent variables, the average of $S_1(t)$ is easily computed:
\begin{align}
\langle S_1(t) \rangle_{\mathrm{P}} &= \frac{1}{N^2} \sum_{\alpha,\beta} \langle e^{-iE_\alpha t} e^{iE_\beta t} \rangle \nonumber \\
&= \frac{1}{N^2} \left[ N \langle |e^{-iEt}|^2 \rangle + N(N-1) |\langle e^{-iEt}\rangle|^2 \right] \nonumber \\
&= |\langle f(t) \rangle|^2 + \frac{1}{N} \left(1 - |\langle f(t) \rangle|^2 \right),
\label{eq:S1_P}
\end{align}
where we used $\langle |e^{-iEt}|^2 \rangle = 1$ and $\langle e^{-iEt}\rangle = \int dE\,\rho(E) e^{-iEt} = e^{-i\alpha t/\hbar} g(t/\tau_\lambda)$ from Eq.~\eqref{eq:favg}. Equation \eqref{eq:S1_P} is an \emph{exact} finite-$N$ identity for the Poissonized ensemble (only the average density \eqref{eq:rhoE} enters).  The first term $|\langle f(t)\rangle|^2 = |g(t/\tau_\lambda)|^2$ is exactly the leading relaxation envelope.  The second term is a smooth correction of order $1/N$ that persists for all times. For large $N$ it is negligible compared to the leading term except when $|g|^2$ itself becomes $O(1/N)$.

For the original GOE/GUE Hamiltonian, the eigenvalues are not independent, and their joint distribution includes the Vandermonde determinant, which induces level repulsion and long-range correlations.  The ensemble average of $S_1(t)$ now receives an additional contribution from the connected two-point correlation function.  Using the decomposition $\langle |f(t)|^2 \rangle = |\langle f(t)\rangle|^2 + \langle |f(t)|^2 \rangle_{\mathrm{corr}}$, we have
\begin{equation}
\langle S_1(t) \rangle_{\mathrm{RMT}} = |g(t/\tau_\lambda)|^2 + \frac{1}{N} \mathcal{F}\!\left(\frac{t}{\tau_d}\right) + \frac{1}{N} \left(1 - |g(t/\tau_\lambda)|^2\right),
\label{eq:S1_RMT}
\end{equation}
which is a \emph{matched asymptotic expression} rather than an exact finite-$N$ identity: the exact initial condition $\langle S_1(0)\rangle=1$ is reproduced exactly, through the cancellation in \eqref{eq:F_def}; the long-time plateau $1/N$ is carried by the smooth term alone (the two-level term vanishes); and the formula is valid to order $1/N$ in the local-bulk approximation, the finite-$N$ density and edge corrections being of relative order $1/N$ in the envelope and $O(1/N^2)$ in the two-level term.
where $\tau_d = \hbar/d$, $\widetilde b(s)$ is the Fourier transform of the GOE two-level cluster function $Y_2(y)$ in the local (bulk) approximation described in Sec.~\ref{sec:eigenbasis_vs_spectral}, $\mathcal{F}(t/\tau_d)=|g(t/\tau_\lambda)|^2-\widetilde b(t/\tau_d)$ is the two-level contribution beyond the smooth part, as defined in \eqref{eq:F_def}, and the term $\frac{1}{N}(1-|g|^2)$ is the same smooth part as in the Poissonian case.

\begin{figure}[t]
\centering
\includegraphics[width=0.62\textwidth]{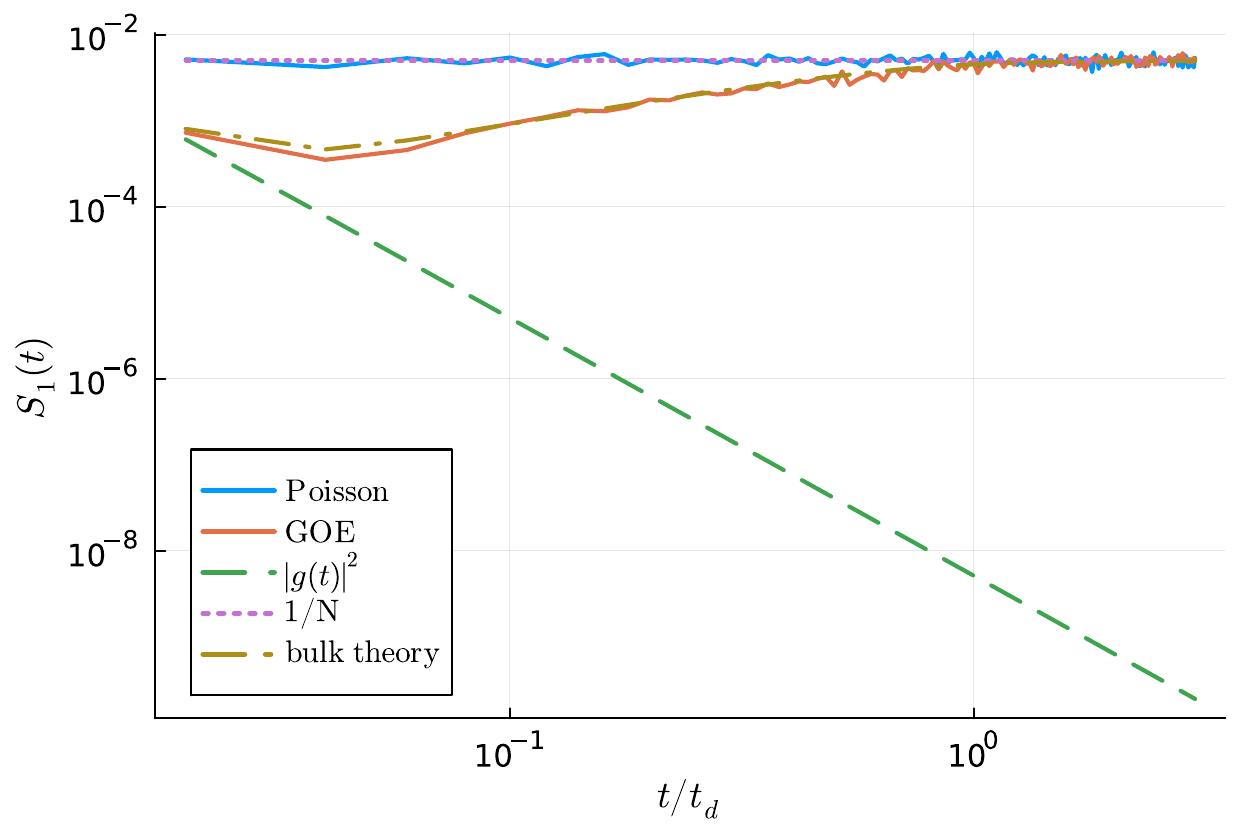}
\caption{Single-particle spectral form factor $\langle S_1(t)\rangle$ for the GOE (solid) and its Poissonized partner (dashed), $N=200$, averaged over $100$ realizations, plotted against $t/\tau_d$ on log-log scales. Both ensembles share the leading decay $|g(t)|^2$ (short-dashed) and the smooth plateau $1/N$ (dotted); the GOE curve dips below its Poissonian partner at intermediate times---the correlation hole encoded in $\mathcal{F}(t/\tau_d)$---and merges with it on the plateau. The dash-dotted ``bulk theory'' curve is the local-bulk prediction \eqref{eq:S1_RMT} (the GOE data are computed from the full finite-$N$ spectra without unfolding); it lies on the GOE data within the statistical resolution. This is the dynamical separation of eigenbasis chaos (common envelope) from spectral chaos (GOE-only two-level structure).}
\label{fig:sff}
\end{figure}

Comparing Eqs.~\eqref{eq:S1_P} and \eqref{eq:S1_RMT}, we see that the leading term $|g|^2$ is identical in both ensembles.  The difference lies entirely in the $O(1/N)$ corrections.  In the Poissonian ensemble, the correction is smooth and given by $N^{-1}(1-|g|^2)$.  In the random-matrix ensemble, there is an additional two-level component $N^{-1}\mathcal{F}(t/\tau_d)$ that reflects the eigenvalue correlations. Cf. Figure \ref{fig:sff}, where the numerical curves are computed from the full finite-$N$ spectra without unfolding, while $\mathcal{F}$ in \eqref{eq:S1_RMT} is the local-bulk form of Sec.~\ref{sec:eigenbasis_vs_spectral}: at the present system size, spectral average and time normalization, the full-spectrum finite-$N$ data exhibit a dip of the GOE curve below its Poissonian partner, and the excess of the GOE data over $|g|^2+N^{-1}(1-|g|^2)$ agrees with $N^{-1}\mathcal{F}(t/\tau_d)$ within the standard error of the $100$-realization average (about $7\times10^{-5}$ at $t/\tau_d=0.05$, rising to $5\times10^{-4}$ at $t/\tau_d\approx1$; the largest deviation over $0.1\le t/\tau_d\le 1.5$ is three standard errors). The dip is thus quantitatively consistent with the local-bulk correlation structure encoded in $\mathcal{F}$, with the remaining deviations of the expected size of the finite-$N$ density and edge effects.

\section{Conclusion}
\label{sec:conclusion}

We have asked which statistical structures of a random many-body Hamiltonian control the different layers of thermalization and eigenstate thermalization, and answered it in the random free fermion model, where the GOE one-body Hamiltonian has Haar-distributed eigenvectors statistically independent of its eigenvalues. What we learned is a four-way attribution. The Haar eigenvector statistics determine the ETH fluctuation scale \eqref{eq:eth_condvar_ens} and the off-diagonal amplitude \eqref{eq:eth_spectral_function}. The one-point semicircle density determines the leading relaxation envelope $g^2(t/\tau_\lambda)$ with $\tau_\lambda=\hbar/(2\eta\sqrt{N})$ and its $|t|^{-3}$ tail, while the variance of the relaxation fluctuations vanishes as $O(1/N)$, so that the envelope self-averages at fixed scaled time. The exact eigenvector-eigenvalue independence explains, through the factorization theorem \ref{thm:factorization}, why the ETH structure of this model is energy independent at the ensemble level; at fixed realization the energy-shell conditioning enters only through a covariance subtraction of relative order $1/N$, and for $k$-body observables the same suppression holds for every fixed $k$. A controlled correlation deformation $H(\varepsilon)=H_0+\varepsilon H_1$ then generates, at first order in $\varepsilon$, the energy dependence of both the diagonal energy-resolved slope and the off-diagonal ETH spectral function. Finally, two-level spectral correlations produce only the subleading $O(1/N)$ late-time fingerprint $\mathcal{F}(t/\tau_d)$ in the single-particle spectral form factor, without modifying the leading envelope. A methodological point of independent interest is that the $O(1/N)$ accuracy of the leading channels throughout requires the exact Haar (Weingarten) second moment of the eigenvector matrix: the Gaussian approximation would violate the exact initial condition at $t=0$ and misstate the coefficients of the eigenvector-fluctuation channel.

The model thereby provides a controlled separation of eigenbasis statistics, coarse-grained spectral density, eigenvector-eigenvalue correlations, and fine-grained spectral correlations: each ingredient of the random Hamiltonian is tied to a distinct physical layer of thermalization or ETH. The separation is exact in the undeformed GOE free-fermion ensemble; the deformation provides a controlled, computable mechanism for generating energy dependence---not a general theory of energy-dependent ETH in chaotic systems, and not a claim that this mechanism is unique. The energy dependence is generated by switching on the eigenvector--eigenvalue correlations that are absent in the undeformed ensemble.

Three directions are most natural. The first is to extend the deformation \eqref{eq:deform_H} to interacting perturbations and to higher orders in $\varepsilon$, connecting the mechanism exhibited here to generic many-body systems. The second is the converse experiment of replacing the Haar eigenbasis by a fixed (non-random) basis while keeping the GOE spectrum, which completes the two-by-two separation table of eigenbasis versus spectrum and thereby closes the comparison between eigenbasis chaos and spectral chaos in this solvable setting. A third direction is to identify the
effective energy window of the deformation, i.e.\ to match the
perturbatively generated eigenvector--eigenvalue correlations
\eqref{eq:deform_cov} to a block-Haar ensemble $U_{\mathrm{loc}}$ acting on
energy windows of width $\Delta_\varepsilon$, and to extend the exact
orthogonal-class coefficients obtained here to the higher-$k$ cactus
diagrams whose symmetry-class dependence was left open in
Ref.~\cite{Vallini2026}.

\section*{Acknowledgements} 
We would like to thank Elisa Vallini for helpful comments.
X.G. is supported by Yancheng Institute of Technology (xjr2024030).  Z.H. is supported by the National Natural Science Foundation of China under Grant No. 12305035.
\appendix
\section{Contraction rules for higher moments of $U(t)$}
\label{app:contractions}

In this appendix we list the explicit expressions for the connected parts of the second, third, and fourth moments of $U(t)$ after eigenvector averaging. These formulas are used in the variance analysis of Sec.~\ref{sec:variance}.

\paragraph{Exact fourth moment of the Haar measure.}
The eigenvectors $\psi_i^a$ form a Haar-distributed matrix in $O(N)$, whose exact Weingarten function yields
\begin{equation}
\langle \psi_i^a \psi_j^a \psi_k^b \psi_l^b \rangle
= \frac{\bigl(N+1-2\delta_{ab}\bigr)\delta_{ij}\delta_{kl} + \bigl(N\delta_{ab}-1\bigr)\bigl(\delta_{ik}\delta_{jl}+\delta_{il}\delta_{jk}\bigr)}{(N-1)N(N+2)}.
\label{eq:haar4}
\end{equation}
This  reproduces the exact second moment \eqref{eq:psicov} upon contraction, the exact same-column variance
 \begin{equation}\langle (\psi_i^a)^2(\psi_j^a)^2\rangle = \frac{1+2\delta_{ij}}{N(N+2)},\end{equation}
 the exact cross-column value 
 \begin{equation}\langle (\psi_i^a)^2(\psi_j^b)^2\rangle = \frac{N+1-2\delta_{ij}}{(N-1)N(N+2)}\end{equation}
 for $a\neq b$ (note that the Gaussian approximation would replace the cross-column value by $1/[N(N+2)]$, missing the relative $O(1/N)$ correction), and the exact unitarity constraints $\sum_i \psi_i^a \psi_i^b = \delta_{ab}$. Summing \eqref{eq:haar4} against $e^{-i(E_a-E_b)t/\hbar}$ over $a,b$ gives the exact second moment \eqref{eq:U2psi}. The same contraction summed against $e^{-i(E_a+E_b)t/\hbar}$ gives the exact \emph{unconjugated} pairing used in the eigenvector-fluctuation channel of Sec.~\ref{sec:variance}:
\begin{equation}
\langle U_{ik}(t)\, U_{mo}(t) \rangle_{\psi,\mathrm{corr}}
= \frac{2\bigl(f(t)^2-f(2t)\bigr)\,\delta_{ik}\delta_{mo} - N\bigl(f(t)^2-f(2t)\bigr)\bigl(\delta_{im}\delta_{ko}+\delta_{io}\delta_{km}\bigr)}{(N-1)(N+2)}.
\label{eq:UUncorr}
\end{equation}
It vanishes at $t=0$ (where $f(0)=f(0\cdot 2)=1$), in contrast to the Gaussian value $\frac{1}{N}\bigl(\delta_{ik}\delta_{jl}+\delta_{il}\delta_{jk}\bigr)f(2t)$ quoted below, another manifestation of the exact initial condition.

\paragraph{Gaussian (Wick) cumulants.}
The third and fourth cumulants below are obtained by applying Wick's theorem to the Gaussian variables $\psi_i^a$ and using \eqref{eq:psicov}. They are valid to leading order in $1/N$ only. The exact Haar cumulants differ from them by corrections of relative order $1/N$, i.e.\ of the same parametric size as the cumulants themselves, and the coefficients of the $O(1/N^2)$ and $O(1/N^3)$ variance contributions of Sec.~\ref{sec:variance} therefore inherit such corrections. The results parallel those derived by Weidenm\"uller for the GOE time‑evolution operator \cite{Weidenmuller2024a}, with the only difference that here the eigenvalues $E_a$ are random and will be averaged later.
\begin{align}
\langle U_{ij}(t) U_{kl}(t) \rangle_{\psi,\mathrm{corr}} &= \frac{1}{N}\bigl( \delta_{ik}\delta_{jl} + \delta_{il}\delta_{jk} \bigr) f(2t), \\
\langle U_{ij}(t) U_{kl}(t) U_{mn}^*(t) \rangle_{\psi,\mathrm{corr}} &= \frac{1}{N^2} \bigl[ \delta_{jk}\delta_{lm}\delta_{ni} + \text{7 permutations} \bigr] f(t), \\
\langle U_{ij}(t) U_{kl}(t) U_{mn}^*(t) U_{pq}^*(t) \rangle_{\psi,\mathrm{corr}} &= \frac{1}{N^3} \bigl[ \delta_{jk}\delta_{lm}\delta_{np}\delta_{qi} + \text{permutations} \bigr].
\end{align}
The factors $f(2t)$ and $f(t)$ arise when contractions force the eigenstate indices to be equal, producing a sum over $e^{-i2E_a t/\hbar}$ or $e^{-iE_a t/\hbar}$, respectively. When averaging over eigenvalues, these become $\langle f(2t)\rangle$ and $\langle f(t)\rangle$, which are $O(1)$ and decay on the scale $\tau_\lambda$. Note that the second moment is \emph{not} included in the Wick list above: for it we use the exact expression \eqref{eq:U2psi}, whose connected part \eqref{eq:U2corr} vanishes at $t=0$ and carries the essential factor $\bigl(1-|f(t)|^2\bigr)$.

\section{Poissonian ensemble}
\label{PE}
A powerful way to demonstrate that the leading relaxation dynamics is insensitive to fine-grained spectral statistics is to compare our random-matrix model with its associated \textit{Poissonian ensemble}, following the ideas of Mag\'an and Wu \cite{MaganWu2024}.  

Given a Hamiltonian $H$ with eigenvectors $\{|\alpha\rangle\}$ and eigenvalues $\{E_\alpha\}$ distributed according to some joint probability law, the Poissonian ensemble is defined by keeping the eigenvectors unchanged but replacing the eigenvalues by a set of independent random variables drawn from the same average density of states $\rho(E)$.  In the context of our free fermion model, where the eigenvectors are Haar-distributed, this construction yields an ensemble of Hamiltonians $H^{\text{(P)}}$ with the same eigenvector statistics as the original GOE/GUE model, but with eigenvalues that are independent and identically distributed (i.i.d.) according to $\rho(E)$ of Eq.~\eqref{eq:rhoE}.  

The order of the eigenvalues is irrelevant for the spectral statistics considered here: all quantities below (linear statistics of the eigenvalue set, such as $f(t)$ and $S_1(t)$) are invariant under relabelling of the eigenvalues. Sorting is introduced only to fix a labelling convention for the eigenvector-eigenvalue pairing, and does not affect any of the averages computed in this appendix.

\paragraph{Average of $f(t)$ in the Poissonian ensemble.}
The function $f(t) = \frac{1}{N}\sum_\alpha e^{-iE_\alpha t}$ is a linear statistic of the eigenvalues (in this section we set $\hbar=1$).  In the Poissonian ensemble, the eigenvalues are i.i.d., so its average is simply
\begin{equation}
\langle f(t) \rangle_{\text{P}} = \int dE\, \rho(E)\, e^{-iEt} = \langle f(t) \rangle_{\text{RMT}},
\label{eq:f_avg_P}
\end{equation}
where $\langle \cdot \rangle_{\text{RMT}}$ denotes the average in the original random-matrix model.  Thus, the leading relaxation envelope $g(t/\tau_\lambda) = e^{i\alpha t/\hbar}\langle f(t)\rangle$ is identical in both ensembles.  This already indicates that the leading relaxation, governed by the Fourier transform of the average level density, is unaffected by the presence or absence of spectral correlations.

\paragraph{Second moment of $f(t)$ in the Poissonian ensemble}

For i.i.d. eigenvalues, the second moment is straightforward to compute:
\begin{align}
\langle |f(t)|^2 \rangle_{\text{P}} &= \frac{1}{N^2} \sum_{\alpha,\beta} \langle e^{-i(E_\alpha-E_\beta)t} \rangle \nonumber \\
&= \frac{1}{N^2} \left[ N \langle |e^{-iEt}|^2 \rangle + N(N-1) |\langle e^{-iEt}\rangle|^2 \right] \nonumber \\
&= |\langle f(t) \rangle|^2 + \frac{1}{N} \left(1 - |\langle f(t) \rangle|^2 \right).
\label{eq:f2_P}
\end{align}
Here we used that $\langle |e^{-iEt}|^2 \rangle = 1$ because $|e^{-iEt}|=1$.  The correction term is smooth in time and of order $1/N$.  For large $N$, it is negligible compared to the leading term $|\langle f(t)\rangle|^2$ at early and intermediate times, but it becomes important when $|\langle f(t)\rangle|^2$ decays to $O(1/N)$.

In contrast, for the original random-matrix model (GOE/GUE), the second moment receives an additional contribution from eigenvalue correlations \cite{Weidenmuller2024a}:
\begin{equation}
\langle |f(t)|^2 \rangle_{\text{RMT}} = |\langle f(t) \rangle|^2 + \frac{1}{N} \mathcal{F}\!\left(\frac{t}{\tau_d}\right) + \frac{1}{N}\left(1 - |\langle f(t) \rangle|^2\right).
\label{eq:f2_RMT}
\end{equation}
Here $\tau_d = \hbar/d$ is the Heisenberg time (with $d$ the mean level spacing), $\widetilde b(s)$ is the Fourier transform of the GOE two-level cluster function $Y_2(y)$ in the local (bulk) approximation of Sec.~\ref{sec:eigenbasis_vs_spectral} (normalized by $\widetilde b(0)=1$), and $\mathcal{F}(t/\tau_d)=|g(t/\tau_\lambda)|^2-\widetilde b(t/\tau_d)$, as defined in \eqref{eq:F_def}.  Like \eqref{eq:S1_RMT}, Eq.~\eqref{eq:f2_RMT} is a matched asymptotic expression, exact at $t=0$ through the cancellation in \eqref{eq:F_def} and valid to order $1/N$ in the local-bulk approximation.  The term $\frac{1}{N}(1 - |\langle f(t)\rangle|^2)$ is the same smooth part as in the Poissonian case, while $\frac{1}{N}\mathcal{F}(t/\tau_d)$ is the two-level signature of spectral chaos (the correlation hole).

\bibliographystyle{amsalpha}

\end{document}